\documentclass[twoside,11pt]{article}

\usepackage[preprint]{jmlr2e}

\usepackage{newtxtext}
\usepackage{amsmath}
\usepackage{newtxmath}
\usepackage{bm}
\usepackage{comment}
\usepackage{upgreek}
\usepackage{lastpage}
\newtheorem{assumption}[theorem]{Assumption}

\title{Physics-Transfer Learning: A Physics-Constrained Theory for Zero-Shot Generalization Along the Ladder of Complexity}

\author{\name Yingjie Zhao \email zhaoyingjie@mail.tsinghua.edu.cn \\
       \name Zhiping Xu$^\ast$ \email xuzp@tsinghua.edu.cn \\
       \addr Applied Mechanics Laboratory, Department of Engineering Mechanics\\
       Tsinghua University\\
       Beijing 100084, China\\
       $^\ast$Corresponding author}

\jmlrheading{1}{2026}{1--\pageref{LastPage}}{}{}{--}{Yingjie Zhao and Zhiping Xu}
\ShortHeadings{Physics-Transfer Learning}{Zhao and Xu}
\firstpageno{1}

\newlength{\tempdima}
\newcommand{\rowname}[1]
{\rotatebox{90}{\makebox[\tempdima][c]{\textbf{#1}}}}

\newcommand\zfig[1]{{\color{violet}#1}}
\begin{document}

\maketitle

\begin{abstract}

Reliable out-of-distribution generalization remains a central challenge in machine learning, particularly when models are deployed in data-scarce high-fidelity domains.
Existing approaches often rely on statistical distribution alignment, or invariance assumptions that can fail when domain shifts reflect changes in underlying mechanisms.
Inspired by a longstanding paradigm in physics that understands complex systems by starting from tractable idealizations and progressively incorporating additional mechanisms and levels of fidelity, we propose physics-transfer (PT) learning, a physics-constrained learning theory for zero-shot generalization along a ladder of complexity.
PT constructs chains of anchor domains (CADs) that connect low-cost source domains to complex targets and introduces computable feasibility and reliability indicators for assessing transfer before target-domain labels are observed.
Under physical-consistency and stable-representation assumptions, we derive an operational upper bound on target risk in terms of source empirical risk, the PT indicators, and finite-sample errors, and establish convergence as adjacent complexity gaps shrink and data coverage improves.
We evaluate PT on excited-state energy prediction in quantum systems and dynamical-transition prediction in classical systems using neural networks trained without target-domain labels.
Across substantial distribution shifts, PT yields accurate high-fidelity predictions, while the indicators identify incorrect physical mappings before target-domain prediction fails.
These results provide a model-agnostic framework for designing, diagnosing, and controlling generalization in data-scarce learning problems.

\end{abstract}

\begin{keywords}
physics-transfer learning, domain adaptation, zero-shot generalization,
physical systems, out-of-distribution prediction
\end{keywords}

\section{Introduction}

Energy calculations and dynamics prediction lie at the core of physical mechanics~\citep{tsien1953physical,landau1932theory,zener1932non,fermi1955studies}, particularly in non-equilibrium scenarios.
Energy landscapes govern the stability, reactivity, and functionality of matter, while dynamical processes determine how physical systems evolve, transform, and respond to external stimuli~\citep{han2025NC,onuchic1997theory,hanggi1990reaction,yang2025multicell,zhao2024CRPS}.
Representative examples include excited-state energy calculations in quantum molecular systems and dynamical strain localization in classical dynamical systems, both of which involve property calculations under strongly non-equilibrium conditions.
Accurate prediction of energies and dynamics directly impacts critical applications such as optoelectronic material design, catalysis, energy storage, and structural materials engineering.

However, modern physical systems of practical relevance are characterized by extreme complexity in the seek for high performance and system integrity.  
In advanced domains such as reusable aerospace systems, controlled nuclear fusion, and integrated semiconductor technologies, physical processes are inherently non-equilibrium, multi-scale, multi-physical, and multi-mechanistic in nature~\citep{baiocco2021overview,yin2024two}. 
This complexity leads to a fundamental accuracy-efficiency dilemma.  
High-fidelity analytical models are typically unavailable.  
Numerical simulations, while accurate, are computationally prohibitive for large-scale systems or long-time evolution, especially when quantum-level precision or fine spatiotemporal resolution is required.
Experimental measurements, on the other hand, suffer from limited coverage of the parameter space, high cost, and restricted accessibility in extreme environments, making systematic data acquisition difficult.  
As a result, reliable prediction of complex physical systems remains fundamentally constrained by the trade-off between physical fidelity, computational efficiency, and data availability.
In physical and materials science, this challenge runs through the end-to-end scientific workflow, including the accurate description of potential energy surfaces (PES)~\citep{zhang2018PRL}, the representation and characterization of material structures~\citep{zhao2023JMPS}, the prediction of non-equilibrium material properties~\citep{zhao2025matter}, and ultimately the spatiotemporal evolution of non-equilibrium dynamics~\citep{zhao2025iclr}.
A unified theory framework therefore needs to be established to encompass these challenges as different manifestations of complexity in physical systems.

Recent advances in artificial intelligence (AI) have introduced new paradigms for addressing these challenges.  
The emergence of large-scale datasets, together with the rapid development of deep learning architectures and optimization algorithms, has enabled data-driven modeling approaches that achieve remarkable performance across scientific domains~\citep{zhang2018PRL,li2022NCS,wang2023scientific,ghafarollahi2025automating}.
In particular, physics-informed machine learning (PIML) has demonstrated success by embedding physical structures into neural networks (NNs) through partial differential equations (PDEs), conservation laws, symmetry constraints, and invariance principles, leading to improved generalization and interpretability~\citep{karniadakis2021NRP,raissi2019JCP,batzner2022NC}.
Despite these successes, fundamental limitations remain. Many real-world physical systems do not admit well-defined governing equations, or are governed by mechanisms that cannot be compactly formulated as explicit PDEs that are needed by these conventional approaches with strong physics (in mathematical forms) constraints~\citep{mcgreivy2024NMI}.
In highly complex, multi-mechanistic, and strongly coupled systems, physical laws may exist only implicitly, across scales, or in hierarchical forms that are not directly accessible to traditional physics-informed learning formulations.
This limits the applicability of equation-based constraints in complex scenarios.
Moreover, high-fidelity digital libraries (DLs) are still far from sufficient.
Existing physics datasets that integrate simulation and experimental data remain scarce and are typically limited to the terabyte scale~\citep{ohana2024well,chow2025physbench,puyin2026quantiphy}.
Recent progress in physical foundation models further highlights both the promise and the limitations of large-scale data-driven learning in physical systems~\citep{chu2026physics}.
Although such models can learn reusable representations across certain physical regimes, their generalization remains strongly dependent on the coverage of training data, the underlying physical regime, temporal scale, and the complexity of target problems~\citep{chu2026physics}.
Therefore, reliable prediction in complex physical systems requires more than data scaling or model scaling, calling instead for an explicit mechanism to organize and evaluate physical knowledge across domains.
As a result, what we can do is to start from simplified models and proceed along the ladder of complexity by transferring the physics and assessing its generalizability and convergence, as has long been practiced in physics and mechanics through successive refinements from idealized models to realistic systems.

We propose the concept and theory of \emph{physics-transfer (PT) Learning}, which introduces physical constraints and data-encoded inductive biases directly into learning theory (\zfig{Figure 1a})~\citep{zhao2025matter,zhao2025iclr,zhao2026predicting}.
By reformulating learning through physically grounded mappings and hierarchical physical structures, this framework enables systematic knowledge transfer across domains of different fidelity and model complexity, thereby supporting zero-shot predictions in unseen target scenarios where direct high-fidelity data are limited or unavailable.
Successful applications have been demonstrated in complex scientific scenarios such as high-performance alloy design and fabrication, as well as predictive modeling of brain morphological development, illustrating its potential for addressing real-world scientific complexity~\citep{zhao2025matter,zhao2025iclr,zhao2026predicting}.
These successes highlight the necessity of a theoretical foundation of the PT approach, as well as demonstration and discussion on simpler, well-controlled physical systems.

In this work, we establish a unified PT theory framework for reliable prediction and generalization in complex physical systems.  
We construct chains of anchor domains (CADs) along the complexity ladder, ordered by physical fidelity and complexity, that bridge between low-cost source domains and high-accuracy target domains.
Importantly, the construction of PT and CADs is not a technical heuristic for improving cross-domain prediction.
Rather, PT and the associated CADs explicitly encode into learning theory a long-standing research paradigm of physics, mechanics, and materials science, in which complex systems are understood by starting from tractable idealizations and progressively incorporating additional physical mechanisms, degrees of freedom, and levels of fidelity.
In contrast to physical foundation models, which often depend on large-scale datasets to capture implicit and sometimes blurred statistical patterns, PT builds on explicitly constructed CADs that encode the underlying physics across complexity levels (\zfig{Figure 1a}).
Based on the CADs, the physically motivated indicators can be calculated to quantitatively evaluate both the feasibility and the convergence of PT, enabling systematic diagnosis of generalization reliability and early detection of failure risks.
Moreover, the target-domain prediction error under out-of-distribution conditions is explicitly bounded by the PT indicators and finite-sample errors.
We validate the framework through two representative classes of physical problems.
These two problems are selected because transition nature combines severe data scarcity with the prohibitive cost of high-fidelity modeling.
The quantum calculations for excited-state energies are exemplified by oscillator strength prediction in molecular systems, and dynamical strain localization in classical dynamical systems is exemplified by sliding resistance calculations.
These two cases represent fundamental physical mappings in energy calculations and dynamics prediction and serve as controlled testbeds for distribution shift, label shift, and zero-shot extrapolation.
Our results demonstrate that, under weak physical constraints, NNs can learn stable latent physical representations that enable accurate prediction across multiple fidelity levels, achieving high-accuracy target-domain prediction using only low-fidelity training data.  
To systematically evaluate when such physics transfer is feasible and reliable, we introduce PT indicators that provide quantitative guidance for generalizability assessment, convergence analysis, and failure diagnosis.
In this broader sense, the unified PT framework encompasses the complexity of physical and materials science by formulating PES, material-structure representations, non-equilibrium properties, and spatiotemporal dynamical evolution within the same theoretical structure.
Together, this work establishes PT learning as a principled learning paradigm that integrates physical structures and the learning theory, offering a scalable, interpretable, and theoretically grounded framework for low-cost, high-accuracy prediction in complex physical systems.

\section{The Concept of Physics Transfer}

Traditional machine learning (ML) paradigms typically rely on the assumptions of independent and identically distributed data and the availability of sufficiently labeled datasets~\citep{abu2012learning}.
However, in complex problems of physical sciences, there often exists distribution shifts, and severe scarcity of labeled data in the target domain, rendering purely statistical learning (SL) frameworks fundamentally limited in achieving reliable generalization.
From a learning-theoretic perspective, domain adaptation theory provides a formal framework for analyzing generalization under distribution shift~\citep{ben2010ML,long2015ICML,wu2025JMLR}.
While this bound formalizes the difficulty of learning under distribution shift, purely statistical alignment of distributions is fundamentally insufficient for complex physical systems, where changes in distributions are driven by evolving physical mechanisms rather than sampling noise.

To address this challenge, this work presents a PT framework~\citep{zhao2025matter,zhao2025iclr}, which introduces physical mappings as weak physical constraints and embeds physical mechanisms directly into the cross-domain learning process along the ladder of complexity (see Appendix~\ref{app:methods} for details).
Instead of relying solely on statistical distribution alignment~\citep{long2015ICML,wu2025JMLR}, PT establishes generalization through physical structural consistency across domains.
The central idea is to reformulate conditional relations not as purely statistical mappings, but as physically grounded mappings $\mathbf{y}_k^i = \mathcal{P}\!\left( \mathbf{x}_k^i; \xi_{\mathcal{A}_k} \right)$, $ \mathbf{x}_k^i$, $\mathbf{y}_k^i$ denotes the input-output pair constrained by $\mathcal{P}$, and $\xi_{\mathcal{A}_k}$ characterizes the fidelity level of the physical model.
To bridge the gap between the source and target domains, PT constructs a sequence of intermediate domains with progressively increasing physical complexity, referred to as CADs,
$\widehat{\mathcal{A}}_k = \left\{ \left( \mathbf{x}_k^i, \mathbf{y}_k^i \right) \right\}_{i=1}^{l_k}, ~ \xi_{\mathcal{A}_k} \in [\xi_{\mathcal{S}}, \xi_{\mathcal{T}}]$, where $k=1,\dots,K$ indexes the anchor domains ordered by increasing physical complexity, and $\xi_{\mathcal{S}}$ and $\xi_{\mathcal{T}}$ denote the fidelities of the source ($\mathcal{S}$) and target ($\mathcal{T}$) domains, respectively.
These anchor domains provide physically interpretable bridges between the source domain and the target domain, enabling a smooth transition from low-fidelity, low-cost models to high-fidelity, high-accuracy physical models while preserving physical coherence (\zfig{Figure 1b}).
Within this framework, physical representations ($Z$) are learned in a latent space of NNs under weak physical constraints enforced by physical mappings, allowing data from different fidelities to naturally align without relying on adversarial alignment~\citep{long2015ICML} or explicit distribution-matching losses~\citep{zhao2019ICLR}.
This avoids the distortion of physical structures that commonly arises in purely statistical domain adaptation approaches.

Meanwhile, physically motivated indicators provide a quantitative mechanism for evaluating feasibility, stability, and risk of cross-domain transfer, enabling predictive diagnosis of model failure rather than post hoc empirical assessment (\zfig{Figure 1c}).
Two indicators are proposed to govern this evaluation process (see Appendix~\ref{app:methods} for details).  
The first indicator, termed as the PT feasibility indicator, $\lambda^{\mathcal{A}_k} = \frac{1}{2} \left[ d(\widehat{{\mathcal{D}}}_{\mathcal{S}}^Y, \widehat{{\mathcal{D}}}_{\mathcal{A}_k}^Y) - d(\widehat{{\mathcal{D}}}_{\mathcal{S}}^Z, \widehat{{\mathcal{D}}}_{\mathcal{A}_k}^Z) \right]^2$, characterizes the feasibility of transferring physical knowledge from the source domain to an anchor domain, thereby quantifying whether meaningful PT can occur across domains
Anchor domains are used as validation intermediates to assess this feasibility.  
The second indicator, termed as the PT reliability indicator, $\delta^{\mathcal{A}_k} = d(\widehat{\mathcal{D}}_{\mathcal{S}}^Z, \widehat{\mathcal{D}}_{\mathcal{T}}^Z)/d(\widehat{\mathcal{D}}_{\mathcal{S}}^Z, \widehat{\mathcal{D}}_{\mathcal{A}_k}^Z)$, characterizes the physical discrepancy between an anchor domain and the target domain, thereby quantifying whether conclusions drawn from anchor-domain validation are informative and reliable for target-domain prediction.
Taken together, $\lambda^{\mathcal{A}_k}$ evaluates whether physics transfer from the source domain to an anchor domain is feasible, while $\delta^{\mathcal{A}_k}$ assesses whether the anchor-domain validation remains predictive for the target domain.
The two indicators with $\lambda^{\mathcal A_k}\geq0$ and $\delta^{\mathcal A_k}\geq1$, provide complementary diagnostics for guiding the PT learning process.

\begin{figure}[t]
\centering
\includegraphics[width=\linewidth]{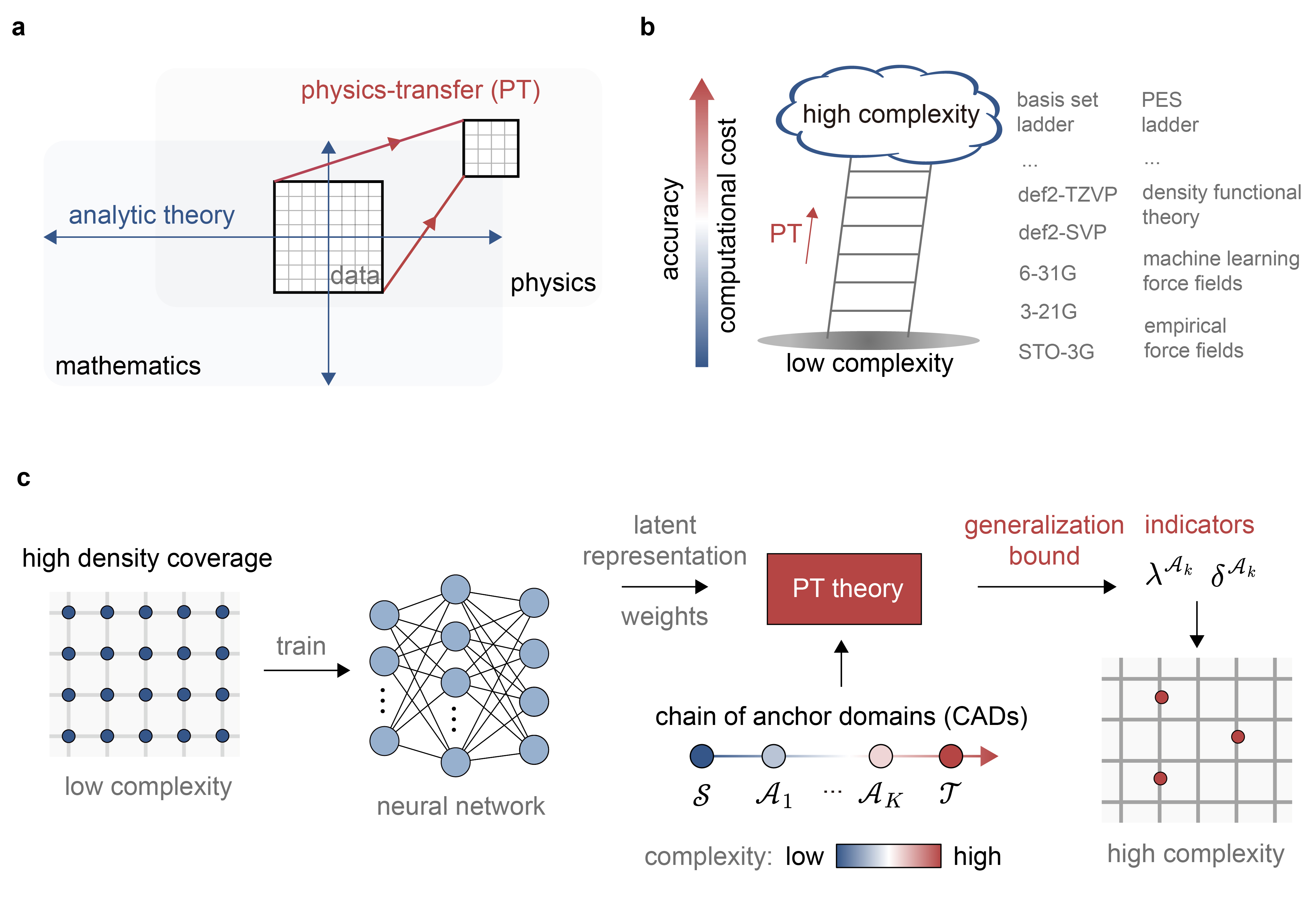}
\caption{\textbf{Physics-transfer (PT) theory framework.}
\textbf{(a)} The concept of PT.
\textbf{(b)} A broadly defined complexity ladder, encompassing aspects such as exchange-correlation functionals, basis sets, and potential energy surfaces (PES).
\textbf{(c)} Schematic of the PT learning framework. Comprehensive digital libraries (DLs) with high-density, large-coverage data enable machine learning (ML) models to learn fundamental physics in simplified or well-characterized settings; the learned physics is then transferred and adapted to more complex, data-scarce scenarios via chains of anchor domains (CADs) and PT indicators.}
\label{fig1}
\end{figure}

In this work, the feasibility evaluation follows a structured protocol (see Appendix~\ref{app:methods} for details).
A NN model is trained exclusively on the source domain, after which the network weights are fixed, and the model is directly evaluated along the CAD $\mathcal{A}_{k}$.
Small values of $\lambda$ and $\delta$ indicate high feasibility of PT.
The convergence evaluation follows a complementary protocol.
A NN is independently trained on anchor domains of progressively increasing complexity.
Convergence is assessed on the most complex anchor domain by analyzing whether predictive performance stabilizes as domain complexity increases.
Models that exhibit stable convergence behavior are then transferred to the target domain, ensuring both robustness and reliability of prediction.
When these conditions are satisfied, models validated on anchor domains can be reliably applied to the target domain, enabling high-accuracy prediction under low computational cost.


Compared with conventional domain adaptation and domain generalization methods~\citep{ben2010ML,cortes2014TCS,long2015ICML,zhao2019ICLR,jiang2022PR,he2024JMLR,wu2025JMLR}, PT learning does not depend on target-domain labels, nor does it assume invariant conditional distributions.
Instead, it uses the consistency of underlying physical mechanisms as the fundamental basis for generalization, thereby enabling reliable learning under distribution shift, label shift, and zero-shot extrapolation scenarios that have long posed significant challenges for SL theory~\citep{abu2012learning}.
PT is complementary to transfer learning, which reuses learned representations across tasks or domains, and to geometric deep learning, which embeds geometric structures and symmetries into model architectures.
PT further provides theoretical guarantees for generalization by leveraging physical continuity along the CADs and the completeness of the DLs (see Appendix~\ref{app:methods} for details).
This paradigm provides a scalable and physically interpretable pathway for achieving low-cost, high-accuracy prediction in complex physical systems.

To illustrate the core ideas and applicability of PT learning, we select two representative physical problems as examples, corresponding respectively to excited-state energy calculations and dynamical strain localization.
These examples demonstrate physics transfer across fidelity levels, where excited-state energy calculations are transferred across different basis-set fidelities, while strain localization dynamics are transferred across force-field fidelities.
In practical applications, both types of problems share the common accuracy-efficiency challenge.
High-fidelity physical models are computationally expensive, while low-fidelity models, although efficient, suffer from systematic bias.
In the following sections, we introduce these two examples in turn and demonstrate how PT learning achieves reliable generalization across different types of physical mappings and addresses the accuracy-efficiency dilemma.

\section{Results for Representative Problems}

These two representative problems are selected because transition phenomena commonly involve severe data scarcity and prohibitively expensive high-fidelity modeling.
The first concerns electronic transitions in quantum systems, where accurate excited-state calculations require increasingly costly quantum-chemical methods.
The second concerns dynamical transitions in classical systems, where strain localization and sliding resistance depend sensitively on the fidelity of the underlying PES.
Together, they provide complementary testbeds for evaluating PT across distinct physical mappings and fidelity hierarchies.

\subsection{PT Models for Quantum Calculations of Excited-State Energies}

The first example focuses on the prediction of oscillator strength in molecular systems~\citep{vinod2025qemfi}.  
Oscillator strength is a key physical quantity that characterizes the probability of electronic transitions and directly determines the intensity of molecular absorption spectra.
It plays an important role in fields such as spectral analysis, optoelectronic material design, and molecular recognition.  
From a physical perspective, oscillator strength is not an independent empirical quantity, but is jointly determined by the transition energy and the transition electric dipole moment.
Its fundamental relationship, which follows from time-dependent perturbation theory and Fermi’s Golden Rule under the electric-dipole approximation~\citep{mchale2017molecular}, can be written as
\begin{equation*}
f \propto \Delta E \, \big| \boldsymbol{\upmu} \big|^2 ,
\end{equation*}
where $\Delta E$ is the transition energy and $\boldsymbol{\upmu}$ is the transition electric dipole moment.  

Within the PT learning framework, the physical mapping $\mathcal{P}$ is defined as the physical relationship from the transition energy and transition electric dipole moment to the oscillator strength, that is $f = \mathcal{P}(\Delta E, \boldsymbol{\upmu})$ (\zfig{Figure 2a}).  
The CADs are constructed progressively along the levels of quantum chemical computational fidelity (\zfig{Figure 2b}).
The source domain $\mathcal{S}$ adopts the low-cost Slater-type orbital (STO-3G) basis set.
The anchor domains $\mathcal{A}_1$, $\mathcal{A}_2$, and $\mathcal{A}_3$ correspond to the 3-21G, 6-31G, and def2-SVP methods, respectively.
The target domain $\mathcal{T}$ is defined as the def2-TZVP method~\citep{weigend2005PCCP}.  
As the basis set accuracy increases, the computational cost rises rapidly due to the growth of the basis dimension and the scaling of electronic structure calculations (typically between $\mathcal{O}(N^3)$ and $\mathcal{O}(N^6)$ with respect to the number of basis functions $N$), making high-fidelity calculations substantially more expensive than minimal basis sets such as STO-3G~\citep{vinod2025qemfi,weigend2005PCCP}.
Meanwhile, the distribution of oscillator strength exhibits systematic shifts~\citep{vinod2025qemfi}.  

We first evaluate the PT feasibility and reliability for the physical mapping across different fidelities.
Specifically, we train the ML model only on the source domain to learn the underlying physical relations embedded in the data.  
The parameters obtained under weak physical constraints during training are then fixed, and the well-trained model is directly applied to prediction in the anchor domains.  
Based on the predictive performance on the anchor domains and the PT indicators, the feasibility of PT can be evaluated.  
If weak physical transfer is feasible, prediction on the target domain can be further performed, thereby achieving high-chemical-accuracy oscillator strength prediction under low computational cost conditions.  
To quantitatively evaluate the feasibility and reliability of the PT process, we compute the PT indicators $\lambda^{\mathcal{A}_k}$ and $\delta^{\mathcal{A}_k}$.  
For the PT model trained on the source domain, we calculate the indicators on the three anchor domains with increasing chemical complexity, and report the indicator values as mean $\pm$ standard deviation (\zfig{Figure 2c}).
Specifically, the PT feasibility indicator increases from $\lambda^{\mathcal{A}_1}=(7.0\pm13.0)\times10^{-6}$ to $\lambda^{\mathcal{A}_2}=(7.09\pm1.16)\times10^{-4}$ and $\lambda^{\mathcal{A}_3}=(1.65\pm0.31)\times10^{-3}$, showing a monotonic growth of the source-anchor discrepancy with increasing basis-set fidelity (\zfig{Figure 2d}).
Nevertheless, even for the anchor domain ($\mathcal{A}_3$) with the largest discrepancy from the source domain, $\lambda^{\mathcal{A}_3}$ remains below $2\times10^{-3}$, indicating that the physical mapping learned from the STO-3G source domain remains transferable across the CADs.
In contrast, the PT reliability indicator decreases from $\delta^{\mathcal{A}_1}=15.11\pm4.21$ to $\delta^{\mathcal{A}_2}=4.23\pm0.16$ and $\delta^{\mathcal{A}_3}=2.72\pm0.06$, corresponding to an approximately $82\%$ reduction from $\mathcal{A}_1$ to $\mathcal{A}_3$.
This decrease indicates that high-accuracy anchor domains $\mathcal{A}_3$ become progressively closer to the target domain in the learned physical representation space, so predictive performance on $\mathcal{A}_3$ provides a more reliable proxy for target-domain feasibility than performance on lower-fidelity anchors.
By directly applying the physical relationships learned in the source domain to the anchor domains, the computational results show that even in the presence of significant label distribution shifts between the anchor domains and the source domain, the model trained on the source domain can still accurately predict the oscillator strength across all anchor domains.  
Specifically, even for the anchor domain with the largest discrepancy from the source domain ($\mathcal{A}_3$), the coefficient of determination $R^2$ between model predictions and ground truth values exceeds $0.98$ (\zfig{Figure 2e}).

\begin{figure}[t]
\centering
\includegraphics[width=\linewidth,height=0.68\textheight,keepaspectratio]{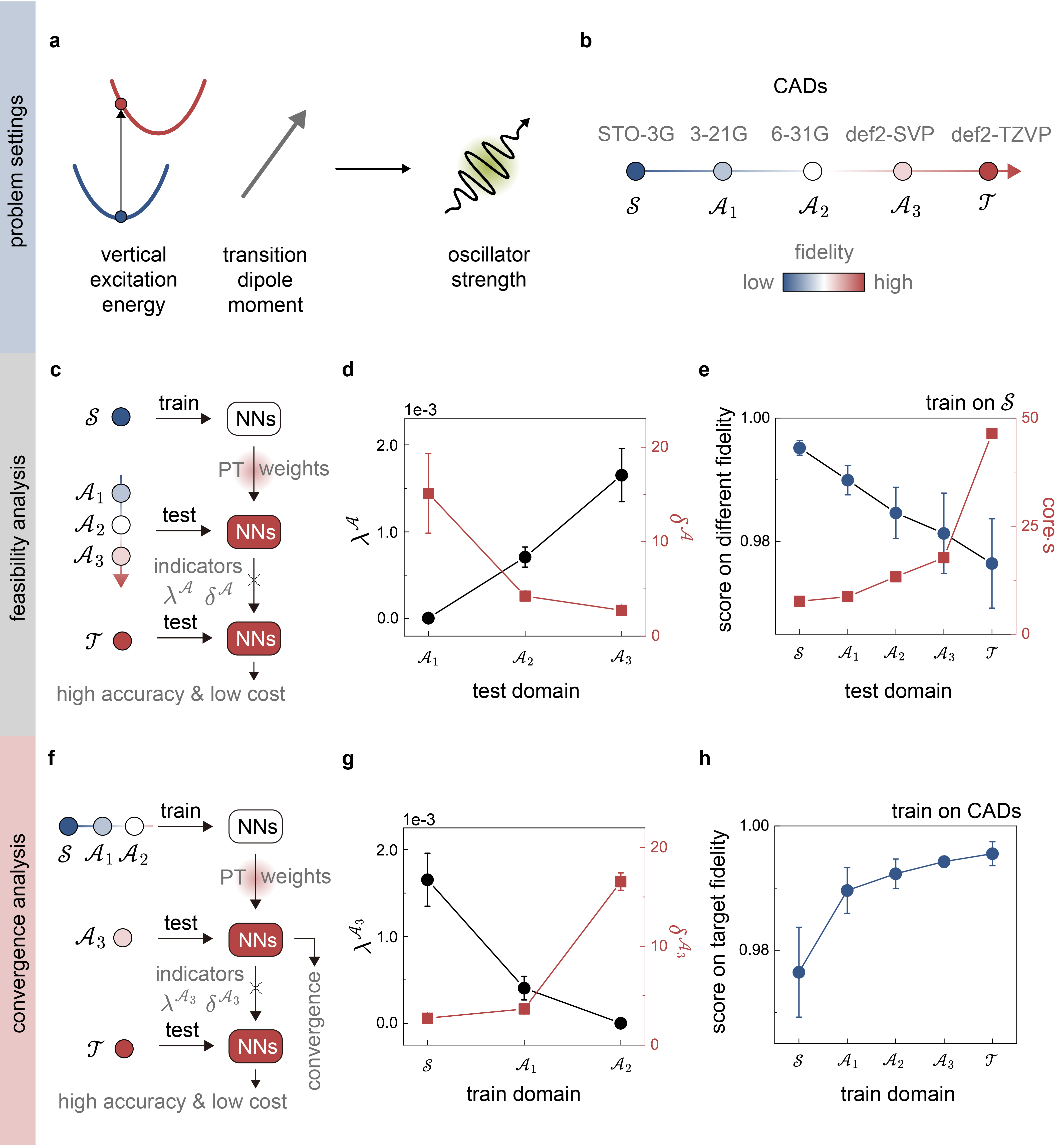}
\caption{\textbf{Application of the PT theory framework to oscillator strength prediction.}
\textbf{(a)} Physical mapping between oscillator strength, transition energy, and dipole moment.
\textbf{(b)} CADs constructed according to the complexity of quantum-chemical basis sets.
\textbf{(c)} Schematic for evaluating the feasibility of PT.
\textbf{(d)} PT indicators for different test domains.
\textbf{(e)} PT performance across different test domains.
\textbf{(f)} Schematic of PT convergence analysis.
\textbf{(g)} Variation of PT indicators with increasing chemical accuracy of the training domain.
\textbf{(h)} PT performance with increasing chemical accuracy of the training domain.}
\label{fig2}
\end{figure}

Furthermore, by gradually increasing the complexity of the training samples, the convergence behavior of PT can be tested.  
Specifically, ML models are trained separately on the source domain and the anchor domains ($\mathcal{A}_1$, $\mathcal{A}_2$).
The trained model parameters are then fixed and directly applied to prediction on the anchor domain ($\mathcal{A}_3$) and the target domain, in order to evaluate the convergence of PT (\zfig{Figure 2f}).  
The results in \zfig{Figure 2g} show a clear quantitative convergence of the feasibility indicator.
When the training domain is increased from $\mathcal{S}$ to $\mathcal{A}_1$ and $\mathcal{A}_2$, $\lambda^{\mathcal{A}_3}$ decreases from $(1.65\pm0.31)\times10^{-3}$ to $(4.06\pm1.36)\times10^{-4}$ and finally to $(1.0\pm1.0)\times10^{-6}$.
This corresponds to a reduction of about $75\%$ from $\mathcal{S}$ to $\mathcal{A}_1$ and more than $99.9\%$ from $\mathcal{S}$ to $\mathcal{A}_2$, confirming that increasing the chemical accuracy of the training data rapidly reduces the physical discrepancy to $\mathcal{A}_3$ and enhances the feasibility of PT.
It should be noted that $\delta^{\mathcal{A}_3}$ gradually increases with the improvement of chemical accuracy in the training data, which quantifies the discrepancy between the anchor domain $\mathcal{A}_3$ and the target domain $\mathcal{T}$.
Specifically, $\delta^{\mathcal{A}_3}$ increases from $2.72\pm0.06$ to $3.67\pm0.11$ and $16.55\pm0.88$, an approximately six-fold increase relative to the source-trained case.
This opposite trend indicates that, although PT to $\mathcal{A}_3$ becomes nearly converged, the relative discrepancy between $\mathcal{A}_3$ and the target domain $\mathcal{T}$ remains an independent reliability factor that must be monitored before using anchor-domain success as evidence for target-domain prediction.
As the chemical accuracy of the training data increases, the coefficient of determination $R^2$ between model predictions and ground truth values also exhibits clear convergence behavior.  
When the model is trained on higher-accuracy anchor domains, its predictive performance on the target domain continues to improve, with $R^2$ exceeding $0.99$, approaching the interpolation-level prediction accuracy within the target domain (\zfig{Figure 2h}).

By integrating the PT indicators, the model’s predictive performance on the anchor domains, and the convergence analysis of PT, sufficient confidence can be established in the reliability of prediction on the target domain.  
The experimental results further validate the theoretical analysis.
On the target domain, even in the presence of significant label distribution shifts, the model still achieves high-accuracy prediction, with the coefficient of determination $R^2$ exceeding $0.96$ (\zfig{Figures 3a and 3b}).  
Meanwhile, in the latent physical representation space, the features of the source domain and the target domain exhibit good alignment, indicating that the model has successfully learned stable underlying physical mechanisms in the data, rather than merely performing statistical fitting (\zfig{Figure 3c}).
The introduction of weak physical constraints enables data of different fidelities to naturally align in the latent space, thereby allowing stable extrapolative prediction even under label shift conditions.  

In contrast, when such weak physical constraints are absent, physical transfer inevitably fails.  
For example, if the ground-state electric dipole moment ($\boldsymbol{\upmu}$) is used instead of the transition electric dipole moment to predict oscillator strength, the lack of a clear physical relationship between the two prevents the model from learning a unified physical representation.
As a result, data of different fidelities cannot be aligned in the latent space (\zfig{Figure 3d}).  
In this case, the PT indicators behave as follows, even if some anchor domains are relatively close to the target domain (small $\delta^{\mathcal{A}_k}$), their physical discrepancy from the source domain becomes significantly larger (large $\lambda^{\mathcal{A}_k}$), and the model’s predictive performance on the anchor domains already shows clear failure (\zfig{Figure 3e}).  
According to the abnormal PT feasibility indicator, failure of prediction on the target domain can therefore be diagnosed in advance, and the computational results are fully consistent with this theoretical judgment (\zfig{Figure 3f}).
The above example of PT failure demonstrates that, even when ground-truth data are unavailable in practice, the risk of PT failure can be identified in advance through the analysis of PT learning indicators and convergence behavior.  
This diagnosis can further be used to guide successful PT by gradually increasing the complexity of the training data or by correcting the physical mapping (\zfig{Figure 3g}).
Furthermore, if experimental data of $f = \mathcal{P}(\Delta E, \boldsymbol{\upmu})$ is introduced as the ground truth, one can then quantitatively assess the methods along the CADs.

\begin{figure}[t]
\centering
\includegraphics[width=\linewidth,height=0.65\textheight,keepaspectratio]{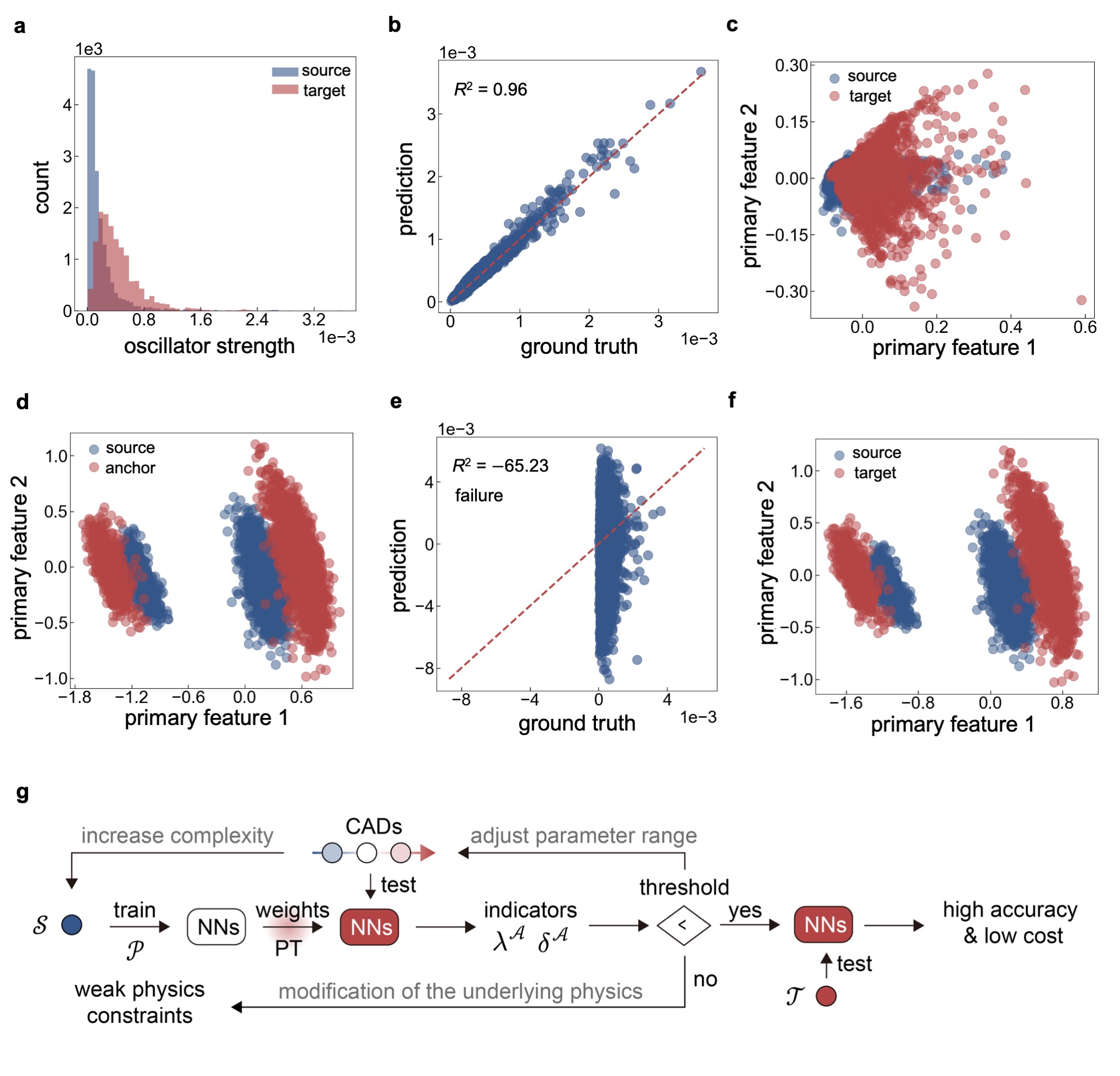}
\caption{\textbf{PT prediction performance and generalization enhancement strategies.}
\textbf{(a--c)} Accurate and efficient prediction of oscillator strength.
\textbf{(a)} Distribution shift of oscillator strength between the source and target domains.
\textbf{(b)} Accurate PT prediction of oscillator strength in the target domain.
\textbf{(c)} Distribution of primary features of the source and target domains in the latent space.
\textbf{(d--f)} Failure of PT caused by an improper physical-mapping constraint.
\textbf{(d)} Distribution of primary features of the source domain and anchor domain $\mathcal{A}_3$ in the latent space.
\textbf{(e)} Prediction failure of PT.
\textbf{(f)} Distribution of primary features of the source and target domains in the latent space.
\textbf{(g)} Monitoring the risk of PT failure and guiding successful PT.}
\label{fig3}
\end{figure}

\subsection{PT Models for Dynamical Transition in Classical Systems}

While accurate energy prediction provides fundamental insights into thermodynamic stability, the prediction of dynamical processes is often even more critical, as it governs the transition of physical systems, such as transport, deformation, and failure behaviors of materials.
Therefore, the second example focuses on dynamical strain localization in classical dynamical systems via sliding resistance calculations.
Sliding resistance quantifies the intrinsic difficulty of displacement between crystalline planes.
It dictates the dynamics of dislocation plasticity, encompassing the transition from collective atomic motion to localized strain and soliton-like transport.
Consequently, this quantity serves as a critical descriptor for predicting friction, plastic deformation, and failure of materials.
The Frenkel-Kontorova (FK) model, as a classical model for describing the motion of atomic chains in periodic potential fields, reveals the essential physical relationship between sliding resistance ($F$) and lattice constant ($a$), spring stiffness ($K$), potential barrier ($V_0$) and periodicity ($b$) (\zfig{Figure 3g})~\citep{braun2004frenkel}:

\begin{equation*}
H = \sum_i \left[ \frac{K}{2} (u_{i+1} - u_i - a)^2 + V_0 \cos\left(\frac{2\pi u_i}{b}\right) - F u_i \right].
\end{equation*}

This explicit dynamical physical relationship provides a theoretical foundation for constructing weak physical constraints.  
Within the PT learning framework, the physical mapping $\mathcal{P}$ is defined as the physical relationship from material parameters to sliding resistance.  
Specifically, the model input variables include the lattice constant, spring stiffness, periodic potential barrier height and period, while the output variable is the sliding resistance (\zfig{Figure 4a}).

The CADs are constructed according to the computational fidelity of the PES.
The source domain $\mathcal{S}$ adopts empirical force fields such as embedded atom method (EAM)~\citep{becker2013EAMdata} and modified embedded atom method (MEAM), the anchor domain consists of machine-learning force fields (MLFFs)~\citep{song2024general}, and the target domain corresponds to density functional theory (DFT) calculations~\citep{su2019JAP} (\zfig{Figure 4b}).

According to differences in the distribution of various materials in the parameter space, this prediction task simultaneously involves interpolation, combinational generalization, and extrapolation when transferring from the source domain to the anchor and target domains (\zfig{Figures 4c-4e}).
For several face-centered-cubic (fcc) metals, including Ag, Au, Cu, Ni, Pd, and Pt, the material parameters in the target domain fall within or moderate shifts to the parameter ranges covered by the source domain training data.
For these systems, the prediction task primarily corresponds to interpolation or mild combinational generalization.
To further examine the extrapolative capability of PT, two additional fcc metals, Ir and Rh, are intentionally selected as challenging test cases.
Compared with the metals included in the training distribution, Ir and Rh exhibit substantially higher cohesive energies and stronger bonding strengths, which lead to larger spring stiffness and modified effective potential barriers in the FK models (\zfig{Figures 4c-4e}).
These physical differences result in a noticeable shift of their parameter distributions relative to the source-domain training range.
Consequently, prediction on Ir and Rh corresponds to a more demanding extrapolation problem in the material parameter space.

Although the PT indicators provide a quantitative criterion for assessing transfer feasibility and reliability, their rigorous estimation requires sufficiently large sample sizes in each domain in order to accurately characterize the distribution discrepancies. 
In the dynamical strain localization problem, the number of available force fields is relatively limited, resulting in a small number of samples and making the statistical estimation of distribution divergences less reliable. 
This limitation arises partly from the high computational cost associated with constructing force fields across multiple fidelity levels. 
Furthermore, the CADs structure in the dynamical strain localization problem is sparser than that in the quantum excited-state example. The hierarchy of computational fidelities for PES is typically limited to empirical force fields, MLFFs, and DFT calculations. 
As a result, the number of intermediate anchor domains is relatively small, which further restricts systematic convergence analyses along the CADs. 
Consequently, in this example, the PT indicators are primarily used in a qualitative manner through the visualization of latent representations and the observation of distribution alignment between domains, together with verification from computational results.

\begin{figure}[t]
\centering
\includegraphics[width=\linewidth,height=0.68\textheight,keepaspectratio]{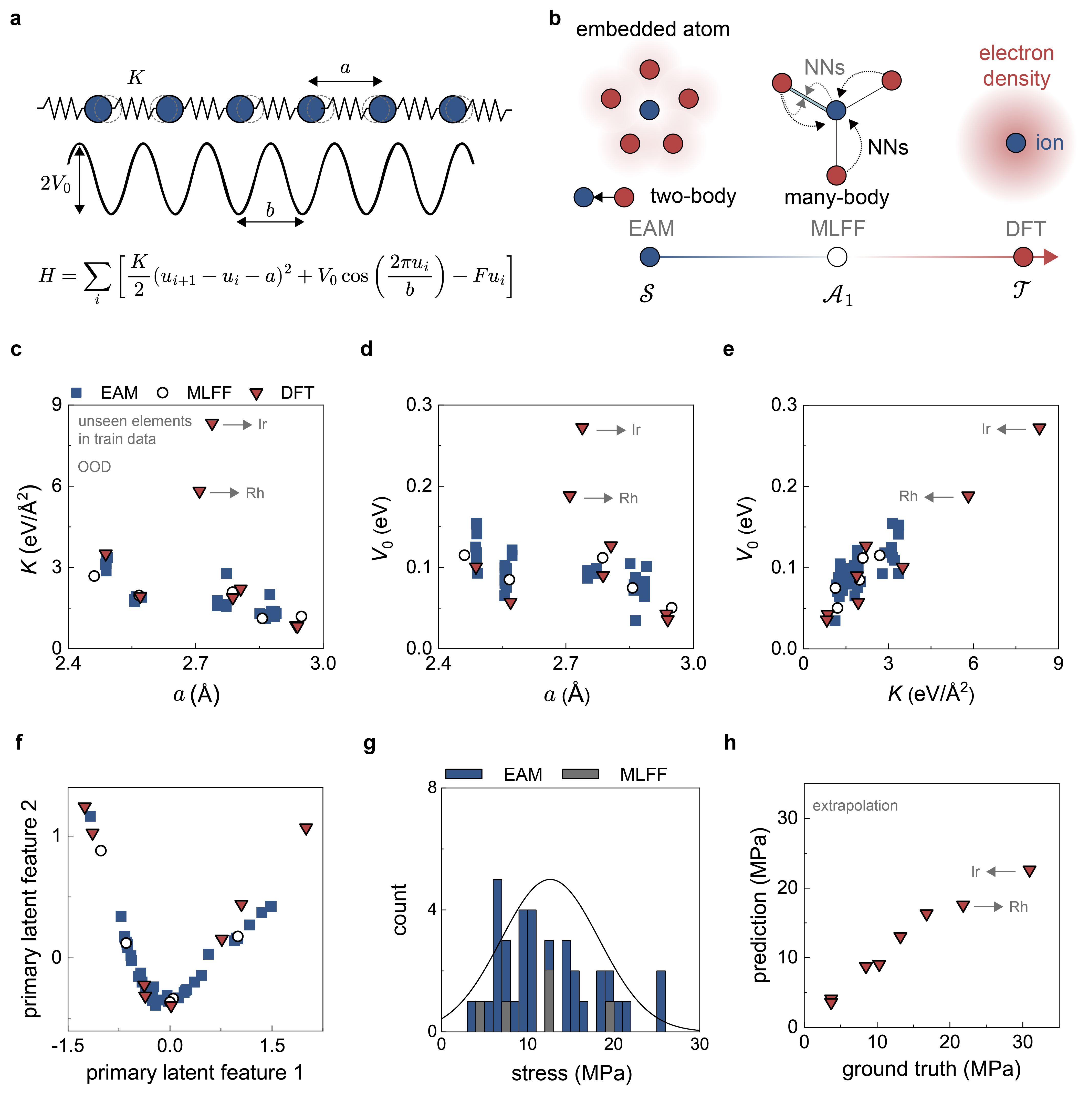}
\caption{\textbf{Application of the PT theory framework to sliding resistance.}
\textbf{(a)} Schematic of the Frenkel--Kontorova (FK) model.
\textbf{(b)} Definition of the CADs for the sliding-resistance problem.
\textbf{(c--e)} Sample distributions in the input parameter space.
\textbf{(c)} Lattice-constant--elastic-constant space.
\textbf{(d)} Lattice-constant--potential-barrier space.
\textbf{(e)} Elastic-constant--potential-barrier space.
\textbf{(f--h)} PT prediction of sliding resistance.
\textbf{(f)} Distribution of primary features in the latent space.
\textbf{(g)} Label distributions of the anchor and source domains.
\textbf{(h)} Prediction results in the target domain.}
\label{fig4}
\end{figure}

Under the guidance of weak physical constraints, the physical representations learned by the model in the latent space enable natural alignment between the source-domain and anchor-domain data (\zfig{Figures 4f and 4g}).
Such alignment in both the latent representation space and the label space provides qualitative evidence consistent with small PT indicators, suggesting that the joint discrepancy between the source and anchor domains remains limited (corresponding to a small $\lambda^{\mathcal{A}_k}$), while the anchor domains also exhibit good physical consistency with the target domain (corresponding to a small $\delta^{\mathcal{A}_k}$) (\zfig{Figures 4f and 4g}).
Quantitatively, the PT feasibility indicator remains small, with $\lambda^{\mathcal{A}_k}=(2.53\pm2.42)\times10^{-5}$, indicating that the discrepancy between the label-space and latent-space source-anchor distances is strongly suppressed by the learned physical representation.
The PT reliability indicator is $\delta^{\mathcal{A}_k}=3.46\pm0.81$, showing that the anchor domain of MLFFs remains within a moderate discrepancy range relative to the target domain of DFT despite the sparse CADs available for the PES hierarchy.
Therefore, sufficient confidence can be established in the model’s predictive performance on both the anchor domains and the target domain.
Consistently, direct prediction on the target domain achieves a mean coefficient of determination of $R^2=0.9898$ over the independent ten runs, demonstrating that the small feasibility indicator and moderate reliability indicator translate into high target-domain accuracy. 
The computational results are highly consistent with the theoretical analysis.
The PT model trained only on source-domain empirical force-field data can accurately predict the sliding resistance at both MLFFs accuracy in the anchor domains and DFT accuracy in the target domain (\zfig{Figure 4h}).

Without such weak physical constraints, successful physical transfer cannot be achieved.
For example, as a contrasting SL baseline, we remove the weak physical constraint associated with the non-equilibrium transition and train the model using only equilibrium-state descriptors, such as lattice constants and elastic moduli, to predict the non-equilibrium strain-localization property.
In this setting, the model no longer receives information that directly encodes the potential barrier, sliding pathway, or transition mechanism governing localized deformation.
The PT feasibility indicator  increases from $(2.53\pm2.42)\times10^{-5}$ for PT to $(8.14\pm11.60)\times10^{-3}$ for SL, corresponding to an increase of more than two orders of magnitude and revealing a much larger mismatch of learned representation between anchor and source domains.
Although the mean reliability indicator of SL, $\delta^{\mathcal{A}_k}=3.30\pm2.76$, is comparable to that of PT, $\delta^{\mathcal{A}_k}=3.46\pm0.81$, its substantially larger fluctuation indicates that anchor-domain evidence becomes less stable when the non-equilibrium physical constraint is absent.
The resulting target-domain performance decreases to $R^2=0.8410$, substantially below the PT result of $R^2=0.9898$.
Thus, for dynamical strain localization, the failure mode is not simply that the target domain is far from the anchor domain, but that equilibrium descriptors alone cannot organize the latent space according to the non-equilibrium transition physics.
This comparison shows that the PT indicators can distinguish physically meaningful transfer from apparent statistical correlation, providing an early warning for unreliable target-domain prediction and guiding the correction of the physical mapping used in model construction.
This result further verifies the effectiveness and universality of PT learning in dynamical problems and suggests its broader applicability to non-equilibrium processes, including strength prediction, as well as transition phenomena involving protein-folding pathways and the associated free-energy barriers rather than equilibrium configurations alone.

\section{Discussion}

\subsection{Challenges in CAD Construction}

A practical limitation of the PT framework lies in the availability and resolution of CADs.
The historical development of engineering science has largely relied on perturbative or gradual improvements in methodological fidelity.
For example, MLFFs perform well for equilibrium properties but remain challenging to apply to non-equilibrium fracture processes, often requiring an order-of-magnitude higher computational cost~\citep{shi2024EML}, while their fidelity strongly depends on the quality and coverage of the training datasets~\citep{wood2025UMA}.
Recent generations of MLFFs have progressively incorporated long-range electrostatics, charge transfer, bond breaking, and chemical reactivity, thereby forming a hierarchy of models with increasing physical fidelity~\citep{shi2024EML}.
When the density of anchor domains is insufficient or the complexity gap between adjacent domains is too large, it becomes difficult to fully characterize the feasibility and convergence of PT using the proposed indicators.  
In such cases, the anchor domains may fail to provide a sufficiently smooth physical bridge between the source and target domains, limiting the diagnostic and predictive power of the PT theory.
This limitation is not intrinsic to the theory itself, but rather reflects the current constraints of data generation methodologies in computational and experimental physics.  
Ongoing advances in computational science and experimental techniques are expected to progressively alleviate this bottleneck.  
For example, in the dynamical strain localization problem, the rapid development of MLFFs with multiple fidelity levels provides a natural pathway for constructing finer-grained CADs, enabling more continuous physical transitions across fidelities~\citep{song2024general}.  
Similarly, advances in experimental characterization techniques make it increasingly feasible to validate PT-based predictions through direct experimental measurements~\citep{zhao2025matter}, thereby establishing a closed-loop verification framework that integrates theory, computation, and experiment.

\subsection{Data, Model, and Algorithmic Perspectives on Enhancing PT}

Effective PT fundamentally depends on whether the source-domain data faithfully encode the underlying physical mechanisms.
The key is to extract the controlling physics from the source domain and transfer it to the target domain.
Data that accurately reflect intrinsic physical relationships provide structurally consistent supervision signals, enabling NNs to learn physically meaningful latent representations rather than spurious statistical correlations~\citep{karniadakis2021NRP}.  
Thus, the quality, physical interpretability, and mechanistic fidelity of training data are critical enablers of successful PT.
In addition, the PT framework imposes no architectural constraints on the choice of ML models.
The performance of PT models can be assessed against ground truth obtained from first-principles calculations or experiments.
It is model-agnostic by design, allowing the integration of diverse architectures ranging from classical ML models to advanced deep learning models~\citep{lecun2015Nature}. 
As ML models continue to evolve in representational capacity and expressivity, their ability to capture latent physical structures improves correspondingly.  
Consequently, PT-based learning naturally benefits from algorithmic advances in representation learning, enabling deeper and more faithful extraction of physical mechanisms from data.
Meanwhile, the PT framework is likewise unconstrained by specific training strategies.  
Advanced optimization techniques, such as active learning, adaptive sampling, uncertainty-aware training, and physics-guided optimization, can be seamlessly integrated to improve sampling efficiency, reduce uncertainty, and enhance model robustness~\citep{pickering2022discovering}.  
These strategies contribute to more stable latent physical representations and more reliable PT across domains.
Future work will focus on developing dimension-reduction theories for latent physical representations, enabling the reconstruction of the underlying physical laws transferred across domains.

\subsection{PT as a Physics-Constrained Learning Theory for Generalization Control}

More fundamentally, PT learning can be interpreted as a physically constrained learning theory for generalization control.  
Instead of treating generalization as a purely statistical phenomenon governed by distribution alignment~\citep{long2015ICML,zhao2019ICLR}, PT explicitly introduces the physical structure across CADs.  
This enables the assessment of generalization risk before deployment through physically motivated indicators, rather than relying on post hoc empirical validation.
By quantifying feasibility, discrepancy, and convergence through physically interpretable indicators, PT provides not only predictive capability but also diagnostic and guiding functionality.  
In this sense, PT does not merely improve prediction accuracy, but establishes a principled framework for risk-aware generalization, reliability assessment, and physically grounded model transfer in complexity.
Overall, PT learning theory provides a unifying paradigm that integrates data fidelity, model expressivity, algorithmic optimization, and physical structure into a coherent learning framework.  
As data generation technologies, modeling techniques, and optimization strategies continue to advance, the effectiveness, resolution, and applicability of PT are expected to expand correspondingly, enabling scalable, reliable, and physically interpretable learning across increasingly complex scientific domains.
In this sense, PT may represent an ultimate solution through stepwise extrapolation and gradual transfer of physical knowledge, following the historical trajectory of engineering sciences.

\section{Conclusion}

In this work, we developed the physics-transfer (PT) theory framework for reliable prediction and generalization in complex physical systems. 
By reformulating conditional learning relations as physically grounded mappings and organizing domains through hierarchical chains of anchor domains (CADs) defined on the complexity ladder, we established a structured pathway for transferring knowledge from low-fidelity source domains to high-fidelity target domains.
Unlike conventional domain adaptation methods that rely primarily on statistical distribution alignment, PT leverages the consistency of underlying physical mechanisms as the fundamental basis for generalization. 
The introduction of physically motivated feasibility and reliability indicators enables quantitative assessment of generalizability and convergence, allowing predictive diagnosis before deployment.
Through representative problems of energy calculations and dynamics prediction, we demonstrated that neural networks trained under weak physical constraints can learn stable latent physical representations that support accurate cross-fidelity prediction even in the presence of distribution and label shifts.
By integrating data fidelity, hierarchical modeling, representation learning, and risk-aware generalization analysis, PT establishes a scalable paradigm for low-cost, high-accuracy scientific prediction.
In this sense, PT learning not only addresses the accuracy-efficiency dilemma in current scientific modeling but also contributes a generalizable theoretical structure for physics-guided artificial intelligence in complex domains.



\acks{This work was supported by the National Natural Science Foundation of China through grants 12425201 and 124B2035, and by the Fundamental and Interdisciplinary Disciplines Breakthrough Plan of the Ministry of Education of China (grant no. JYB2025XDXM205).
The computation was performed on the Explorer 1000 cluster system of the Tsinghua National Laboratory for Information Science and Technology.}

\appendix
\section{Methods}
\label{app:methods}

\subsection{Formalization of Physics-Transfer Theory}

\textbf{Preliminaries.}
Let $\mathcal{X}$ and $\mathcal{Y}$ denote the input and output spaces.
A hypothesis $h:\mathcal{X}\rightarrow\mathcal{Y}$ is selected from a hypothesis class $\mathcal{H}$.
Let $\ell:\mathcal{Y}\times\mathcal{Y}\rightarrow \mathbb{R}$ be a bounded loss function.
For a probability distribution $\mathcal{D}$ over $\mathcal{X}\times\mathcal{Y}$, the expected risk is
$\varepsilon_{\mathcal{D}}(h)=
\mathbf{E}_{(x,y)\sim\mathcal{D}}[\ell(h(x),y)]$.
We consider two domains, source ($\mathcal{S}$) and target ($\mathcal{T}$).
Training data are drawn from $\mathcal{D}_{\mathcal S}$, while predictions are required on $\mathcal{D}_{\mathcal T}$.
In physical and engineering sciences, many complex problems remain intractable for direct solution due to formidable theoretical, computational, or experimental challenges.
However, in practice, these problems are often approached by starting from simplified models and progressively introducing corrections for nonlinearity, many-body interactions, hidden variables, and other sources of complexity.
Physics-transfer (PT) learning formalizes this stepwise strategy by progressively transferring physically meaningful structures from the source domain to achieve zero-shot generalization in the target domain.

\noindent
\textbf{Physical anchor domains.}
To formalize the notion of physical complexity, we introduce a parameterized physical mapping behind the observable data.

\begin{definition}[Physical hierarchical mapping (PHM)]
Let $\mathcal{P}:\mathcal{X}\times\Xi\rightarrow\mathcal{Y}$
be a physical mapping parameterized by a complexity parameter $\xi\in\Xi$.
A domain $\mathcal{D}_\xi$ is induced by $Y=\mathcal{P}(X;\xi)$.
\end{definition}

The source and target domains correspond to parameters $\xi_{\mathcal S}$ and $\xi_{\mathcal T}$.
In practice, the complexity gap between these domains may be large, making direct transfer unreliable.
PT therefore introduces intermediate domains that gradually bridge this gap.

\begin{definition}[Chains of anchor domains, CADs]
An anchor domain $\mathcal{A}_k$ is defined by a parameter
$\xi_{\mathcal A_k}$ satisfying $\xi_{\mathcal S} < \xi_{\mathcal A_k} < \xi_{\mathcal T}$.
A sequence $\mathcal{S} \rightarrow \mathcal{A}_1 \rightarrow \cdots \rightarrow \mathcal{A}_K \rightarrow \mathcal{T}$
is called the chains of anchor domains (CADs).
\end{definition}

The CADs construction reflects the historical development of engineering sciences, where physical knowledge is often extended stepwise from simpler systems to more complex ones.

\begin{definition}[Physical stable representation (PSR)]
A representation map $\Phi:\mathcal{X}\rightarrow\mathcal{Z}$ induces a predictor $h(x)=g(\Phi(x))$, where $g:\mathcal{Z}\rightarrow\mathcal{Y}$.
We denote the physical representation as $Z=\Phi(X)$.
\end{definition}

In PT learning, the representation $\Phi$ is expected to capture latent physical variables that remain relatively stable across domains through weak physics constraints $\mathcal{P}$.

\noindent
\textbf{Physical assumptions.}
To characterize cross-domain transfer in physical systems, we introduce two mild assumptions.

\begin{assumption}[Physical consistency]
\label{ass:physical-consistency}
The PHM $\mathcal{P}(x;\xi)$ is Lipschitz continuous with respect to $\xi$
\begin{equation*}
\|\mathcal{P}(x;\xi_1)-\mathcal{P}(x;\xi_2)\| \le L|\xi_1-\xi_2|,
\end{equation*}
for all $x$.
\end{assumption}

This assumption indicates that the physical system responses vary smoothly with respect to complexity parameters.
When the physical discrepancy between neighboring domains is small, the CADs provide a favorable pathway for transferring physical knowledge from the source domain to the target domain.
In practice, however, physical discrepancies may arise due to modeling approximations, numerical discretization errors, limited experimental resolution, or the well-known simulation-to-reality (sim2real) gap. 
These discrepancies may weaken the continuity of the physical mapping. 
Continuous advances in computational modeling, numerical solvers, and experimental measurement technologies help improve the density of CADs and reduce the gap. 
The long-term goal of PT learning is therefore aligned with the broader effort of closing the sim2real gap and maintaining approximate physical consistency across domains.

\begin{assumption}[Stable representation]
\label{ass:stable-representation}
There exists a representation $\Phi$ such that for any $\xi_1,\xi_2$
\begin{equation*}
d(\mathcal D_{\xi_1}^Z,\mathcal D_{\xi_2}^Z) \le C|\xi_1-\xi_2|,
\end{equation*}
where $d(\cdot,\cdot)$ denotes a distribution divergence.
\end{assumption}

This assumption indicates that physically meaningful latent variables change smoothly across domains through the weak physics constraints $\mathcal{P}$.
In PT learning, these weak physics constraints are implemented across multiple components of the learning process, including the construction of the digital libraries (DLs), the design of the ML architecture, the optimization objective and algorithms. 
Together, these mechanisms guide the model to learn latent representations that reflect the underlying physical structures while avoiding the need for explicitly specified governing PDEs.
In addition, even if the underlying physical system exhibits continuity with respect to the complexity parameter, successful learning of the latent physical representation still depends on the completeness of the DLs.
In practice, the representation map $\Phi$ is approximated from finite observations. 
If the DLs insufficiently cover the variability of the physical system, the learned representation may fail to capture the underlying physical mechanisms, and the stability condition may not hold.
Therefore, the completeness and diversity of the DLs play a crucial role in enabling the learning of physically meaningful latent representations. 
Constructing sufficiently rich DLs is an important prerequisite for reliable PT learning across domains.

\begin{lemma}[Representation Lipschitz property]
\label{lem:representation-lipschitz}
Let $\Phi:\mathcal{X}\rightarrow\mathcal{Z}$ be a representation map that is Lipschitz continuous with constant $L_\Phi$.
Then the divergence between domains in the representation space satisfies
\begin{equation*}
d(\mathcal D_{\mathcal S}^Z,\mathcal D_{\mathcal T}^Z) \le L_\Phi d(\mathcal D_{\mathcal S}^X,\mathcal D_{\mathcal T}^X).
\end{equation*}
\end{lemma}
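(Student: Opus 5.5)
The divergence $d$ is left generic in the statement, so the first step is to fix a class for which the inequality can hold. I will take $d$ to be the $1$-Wasserstein distance $W_1$, using the metric on $\mathcal{X}$ for the input marginals and the metric on $\mathcal{Z}$ for the representation marginals. Here $\mathcal{D}^Z_{\mathcal S}=\Phi_\#\mathcal{D}^X_{\mathcal S}$ and $\mathcal{D}^Z_{\mathcal T}=\Phi_\#\mathcal{D}^X_{\mathcal T}$ are push-forwards. The claim then becomes the standard statement that a Lipschitz map contracts $W_1$ up to its Lipschitz constant. The same argument works for any integral probability metric whose test-function class $\mathcal{F}_Z$ satisfies $\{f\circ\Phi : f\in\mathcal{F}_Z\}\subseteq L_\Phi\,\mathcal{F}_X$; the proof should say this explicitly.

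I would give the proof in its dual form.

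\begin{enumerate}
\item By Kantorovich--Rubinstein duality, $W_1(\mathcal{D}^Z_{\mathcal S},\mathcal{D}^Z_{\mathcal T})=\sup_{\|f\|_{\mathrm{Lip}}\le 1}\bigl(\mathbf{E}_{\mathcal{D}^Z_{\mathcal S}}f-\mathbf{E}_{\mathcal{D}^Z_{\mathcal T}}f\bigr)$.
\item By the change-of-variables formula for push-forward measures, each expectation equals $\mathbf{E}_{\mathcal{D}^X}[f\circ\Phi]$ under the corresponding input marginal.
\item If $f$ is $1$-Lipschitz, then $f\circ\Phi$ is $L_\Phi$-Lipschitz. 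Hence $(f\circ\Phi)/L_\Phi$ is admissible in the dual formula on $\mathcal{X}$.
\item Taking the supremum over $f$ gives the bound with constant $L_\Phi$. The trivial case $L_\Phi=0$ is handled separately: then $\Phi$ is constant and both sides vanish.
\end{enumerate}

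There is also a primal argument that needs no duality. Take any coupling $\gamma$ of $\mathcal{D}^X_{\mathcal S}$ and $\mathcal{D}^X_{\mathcal T}$. Then $(\Phi\times\Phi)_\#\gamma$ is a coupling of the two $Z$-marginals. Its cost satisfies $\int\|\Phi(x)-\Phi(x')\|\,d\gamma\le L_\Phi\int\|x-x'\|\,d\gamma$, and taking the infimum over $\gamma$ gives the result.

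The main obstacle is the generality of $d$, not the computation. The inequality is false for some divergences. For KL or total variation the data-processing inequality gives constant $1$ rather than $L_\Phi$, so for $L_\Phi<1$ the stated bound can fail. For MMD one would need the pulled-back kernel's RKHS to embed into the original one with norm at most $L_\Phi$, which is not automatic.

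I would therefore state the lemma under the standing convention that $d$ is $W_1$, or an IPM over Lipschitz-bounded test functions. I would add a remark that for $f$-divergences the analogous bound holds with constant $1$, by data processing. Finally, I would note the integrability conditions: finite first moments on $\mathcal{X}$, and measurability of $\Phi$, which follows from continuity.
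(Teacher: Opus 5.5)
Your proof is correct and uses the same idea as the paper, made rigorous. The paper's argument only observes that a Lipschitz $\Phi$ keeps input perturbations bounded, so the divergence between the push-forwards is ``bounded proportionally,'' without ever specifying $d$. Your primal argument is exactly the rigorous form of that sentence: you push a coupling $\gamma$ forward by $\Phi\times\Phi$ and bound its cost by $L_\Phi\int\|x-x'\|\,d\gamma$. The Kantorovich--Rubinstein version is its dual.

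What your route adds is the hypothesis that makes the statement true. The inequality is a property of $W_1$, or of an IPM whose pulled-back test class $\{f\circ\Phi\}$ lies in $L_\Phi\mathcal{F}_X$. It is not a property of a generic divergence.

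This matters here because Lemma~\ref{lem:triangle-discrepancy} names MMD and the Jensen--Shannon distance as admissible choices, and the experiments use an RBF-kernel MMD. Both fail for $\Phi(x)=x/2$, i.e.\ $L_\Phi=1/2$:
\begin{itemize}
\item For the Jensen--Shannon distance, data processing gives constant $1$, and this injective contraction leaves the distance unchanged.
\item For a Gaussian-kernel MMD with fixed bandwidth, two widely separated Dirac masses have discrepancy close to $\sqrt{2}$ both before and after applying $\Phi$.
\end{itemize}
The paper's sketch thus claims a generality it cannot support. Your version is valid, but, as you note, it does not by itself justify using the lemma with the MMD-based quantities in Theorem~\ref{thm:operational-bound}.
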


\begin{proof}
Let $Z=\Phi(X)$.  
Since $\Phi$ is Lipschitz continuous, perturbations in the input space are bounded after mapping to the representation space.
Therefore, the divergence between the pushforward distributions $\mathcal D^Z$ induced by $\Phi$ is bounded proportionally by the divergence between the original distributions in $\mathcal X$.
\end{proof}

\noindent
\textbf{Classical domain adaptation bound.}
The analysis builds upon the following classical result.

\begin{lemma}[Domain adaptation bound; see \citet{ben2010ML}]
\label{lem:domain-adaptation}
For any $h\in\mathcal{H}$,
\begin{equation*}
\varepsilon_{\mathcal T}(h) \le \varepsilon_{\mathcal S}(h) + d(\mathcal{D}_{\mathcal S}^X,\mathcal{D}_{\mathcal T}^X) + \lambda^*,
\end{equation*}
where $\lambda^* = min_{h\in\mathcal{H}}(\varepsilon_{\mathcal S}(h)+\varepsilon_{\mathcal T}(h)).$
\end{lemma}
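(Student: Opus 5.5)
The plan is to reproduce the standard Ben-David et al.\ argument. The divergence $d$ is taken to be half the $\mathcal{H}\Delta\mathcal{H}$-divergence, and the loss is assumed to satisfy the triangle inequality. These are the conditions under which the cited result holds; the statement leaves them implicit. Let $h^*\in\mathcal H$ attain $\lambda^*$, so that $\varepsilon_{\mathcal S}(h^*)+\varepsilon_{\mathcal T}(h^*)=\lambda^*$. For two hypotheses $h,h'$, write $\varepsilon_{\mathcal D}(h,h')=\mathbf{E}_{x\sim\mathcal D^X}[\ell(h(x),h'(x))]$ for their disagreement under $\mathcal D$. This quantity depends only on the marginal $\mathcal D^X$, which is what allows the input-space divergence to enter.

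The argument is a chain of three inequalities. First, the triangle inequality for $\ell$ on the target gives
\[
\varepsilon_{\mathcal T}(h)\le \varepsilon_{\mathcal T}(h^*)+\varepsilon_{\mathcal T}(h,h^*).
\]
Second, we move the disagreement term from the target to the source:
\[
\varepsilon_{\mathcal T}(h,h^*)\le \varepsilon_{\mathcal S}(h,h^*)+\bigl|\varepsilon_{\mathcal T}(h,h^*)-\varepsilon_{\mathcal S}(h,h^*)\bigr|.
\]
Third, a triangle inequality on the source gives
\[
\varepsilon_{\mathcal S}(h,h^*)\le \varepsilon_{\mathcal S}(h)+\varepsilon_{\mathcal S}(h^*).
\]
Combining the three yields
\[
\varepsilon_{\mathcal T}(h)\le \varepsilon_{\mathcal S}(h)+\bigl[\varepsilon_{\mathcal S}(h^*)+\varepsilon_{\mathcal T}(h^*)\bigr]+\bigl|\varepsilon_{\mathcal T}(h,h^*)-\varepsilon_{\mathcal S}(h,h^*)\bigr|.
\]
The bracket equals $\lambda^*$.

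The key step is bounding the last term by $d(\mathcal D_{\mathcal S}^X,\mathcal D_{\mathcal T}^X)$. For this I take
\[
d(\mathcal D_{\mathcal S}^X,\mathcal D_{\mathcal T}^X)=\sup_{h,h'\in\mathcal H}\bigl|\varepsilon_{\mathcal S}(h,h')-\varepsilon_{\mathcal T}(h,h')\bigr|,
\]
which is exactly $\tfrac12 d_{\mathcal H\Delta\mathcal H}$ in the $0$--$1$ loss case. With this definition the bound holds by taking the supremum at the pair $(h,h^*)$. For a general bounded loss, the same definition, a discrepancy distance in the sense of Mansour et al., makes the step equally immediate.

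The main obstacle is definitional rather than technical. The paper leaves $d$ as a generic divergence, and for an arbitrary divergence, such as KL or Wasserstein without extra regularity, the inequality can fail. I would therefore state explicitly that $d$ is the $\mathcal H\Delta\mathcal H$/discrepancy divergence and that $\ell$ obeys the triangle inequality. Under those two conditions, the three-line derivation above completes the proof, in agreement with \citet{ben2010ML}.
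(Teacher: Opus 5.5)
Your proposal is correct and is exactly the standard argument of \citet{ben2010ML}, which the paper cites without giving a proof of its own. Your explicit choice of $d$ as $\tfrac12 d_{\mathcal H\Delta\mathcal H}$ (or the Mansour et al.\ discrepancy), together with a triangle-inequality loss, supplies precisely the hypotheses that the paper's generic $d$ leaves out. One small addition: the third step also needs $\ell$ to be symmetric, since the triangle inequality only gives $\ell(h(x),h^*(x))\le \ell(h(x),y)+\ell(y,h^*(x))$, and you need $\ell(y,h^*(x))=\ell(h^*(x),y)$ to recover $\varepsilon_{\mathcal S}(h^*)$.
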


The ideal joint error term $\lambda^*$ is generally not observable when target labels are unavailable.  

\noindent
\textbf{PT indicators.}
PT therefore introduces operational indicators that approximate transfer feasibility and reliability.

\begin{definition}[PT feasibility indicator]
\begin{equation*}
\lambda^{\mathcal A_k} = \frac{1}{2} \left[ d(\widehat{\mathcal D}_{\mathcal S}^Y, \widehat{\mathcal D}_{\mathcal A_k}^Y) - d(\widehat{\mathcal D}_{\mathcal S}^Z, \widehat{\mathcal D}_{\mathcal A_k}^Z) \right]^2,
\end{equation*}
\end{definition}
where $\widehat{\mathcal{D}}$ represents the empirical distribution associated with finite observations from the domain.
This indicator measures the transfer feasibility of physical knowledge from the source domain to the anchor domain, and its form is inspired by \citet{zhao2019ICLR}.

\begin{definition}[PT reliability indicator]
\begin{equation*}
\delta^{\mathcal A_k} = \frac{d(\widehat{\mathcal D}_{\mathcal S}^Z,\widehat{\mathcal D}_{\mathcal T}^Z)}{d(\widehat{\mathcal D}_{\mathcal S}^Z, \widehat{\mathcal D}_{\mathcal A_k}^Z)}.
\end{equation*}
\end{definition}

This indicator quantifies the relative physical discrepancy between the anchor and target domains along CADs.
It assesses the reliability of extrapolating the feasibility conclusions derived from the anchor domain to the target domain.

\noindent
\textbf{PT operational generalization bound.}
The following result establishes an operational upper bound on the target-domain risk in terms of the source-domain empirical error, the PT indicators, and finite-sample estimation errors.

\begin{lemma}
\label{lem:triangle-discrepancy}
The discrepancy between the source and target domains satisfies
\begin{equation*}
d(\mathcal D_{\mathcal S}^Z,\mathcal D_{\mathcal T}^Z) \le d(\mathcal D_{\mathcal S}^Z,\mathcal D_{\mathcal A_k}^Z) + d(\mathcal D_{\mathcal A_k}^Z,\mathcal D_{\mathcal T}^Z).
\end{equation*}
\end{lemma}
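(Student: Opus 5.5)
This is the triangle inequality for the divergence $d(\cdot,\cdot)$ on distributions over the representation space $\mathcal{Z}$. The plan is to make explicit which property of $d$ is needed, to verify it for the divergences the paper actually uses, and then to apply it to the three pushforward distributions $\mathcal D_{\mathcal S}^Z$, $\mathcal D_{\mathcal A_k}^Z$, $\mathcal D_{\mathcal T}^Z$ induced by $\Phi$.

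\textbf{Step 1: fix the class of divergences.} The paper leaves $d$ unspecified, so I will fix it as an integral probability metric (IPM):
\begin{equation*}
d(P,Q)=\sup_{f\in\mathcal F}\bigl|\mathbf{E}_{P}f-\mathbf{E}_{Q}f\bigr|.
\end{equation*}
This covers the $\mathcal H$-divergence of \citet{ben2010ML}, used in Lemma~\ref{lem:domain-adaptation}. It also covers total variation, the Wasserstein-1 distance and MMD. If instead $d$ is a Wasserstein-$p$ distance, I would use the gluing lemma for couplings. If $d$ is an $f$-divergence such as KL, the triangle inequality fails in general, and I would state that the lemma assumes $d$ is a (pseudo)metric.

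\textbf{Step 2: the key inequality for IPMs.} For each fixed $f\in\mathcal F$, I add and subtract $\mathbf{E}_{\mathcal D_{\mathcal A_k}^Z}f$ and apply the scalar triangle inequality:
\begin{equation*}
\bigl|\mathbf{E}_{\mathcal D_{\mathcal S}^Z}f-\mathbf{E}_{\mathcal D_{\mathcal T}^Z}f\bigr|
\le \bigl|\mathbf{E}_{\mathcal D_{\mathcal S}^Z}f-\mathbf{E}_{\mathcal D_{\mathcal A_k}^Z}f\bigr|+\bigl|\mathbf{E}_{\mathcal D_{\mathcal A_k}^Z}f-\mathbf{E}_{\mathcal D_{\mathcal T}^Z}f\bigr|.
\end{equation*}
The right-hand side is at most $d(\mathcal D_{\mathcal S}^Z,\mathcal D_{\mathcal A_k}^Z)+d(\mathcal D_{\mathcal A_k}^Z,\mathcal D_{\mathcal T}^Z)$. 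Taking the supremum over $f$ on the left then gives the claim.

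\textbf{Step 3: the Wasserstein case.} Take an optimal coupling of $(\mathcal D_{\mathcal S}^Z,\mathcal D_{\mathcal A_k}^Z)$ and one of $(\mathcal D_{\mathcal A_k}^Z,\mathcal D_{\mathcal T}^Z)$. Glue them along their common marginal $\mathcal D_{\mathcal A_k}^Z$. Minkowski's inequality in $L^p$ then yields the same bound.

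\textbf{Main obstacle.} The mathematics is routine; the real difficulty is that the statement is only well posed once $d$ is restricted. I expect the main work to be stating clearly that $d$ must satisfy the triangle inequality, and noting that all the divergences appearing in the bound (the $\mathcal H$-divergence, Wasserstein, MMD) do.

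Nothing about the anchor domain itself is used, beyond $\mathcal D_{\mathcal A_k}^Z$ being a probability distribution on $\mathcal Z$. In particular, neither the ordering $\xi_{\mathcal S}<\xi_{\mathcal A_k}<\xi_{\mathcal T}$ nor Assumption~\ref{ass:stable-representation} is needed.
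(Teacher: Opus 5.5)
Your proposal is correct and takes the same route as the paper. The paper's proof also just invokes the triangle inequality for $d$, restricted to probability distances such as MMD, the Jensen--Shannon distance, or the Wasserstein distance; your IPM add-and-subtract argument and the gluing argument only make explicit what the paper takes for granted. One of the paper's listed examples falls outside your verification: the Jensen--Shannon distance is neither an IPM nor a Wasserstein distance, so its triangle inequality needs a separate, known result.
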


\begin{proof}
The result follows directly from the triangle inequality of the divergence metric, where $d(\cdot,\cdot)$ denotes a probability distance satisfying the triangle inequality (e.g., maximum mean discrepancy (MMD), Jensen-Shannon distance (the square root of Jensen-Shannon divergence), or Wasserstein distance)~\citep{gretton2012kernel}.
\end{proof}

\begin{theorem}[PT operational generalization bound]
\label{thm:operational-bound}
Suppose Assumptions~\ref{ass:physical-consistency} and~\ref{ass:stable-representation} hold.
Then for any hypothesis $h\in\mathcal{H}$,
\begin{equation*}
\varepsilon_{\mathcal T}(h) \le \widehat{\varepsilon}_{\mathcal{S}}(h) + d(\widehat{{\mathcal{D}}}_{\mathcal{S}}^X, \widehat{{\mathcal{D}}}_{\mathcal{T}}^X) +B_1\,\lambda^{\mathcal A_k} + B_2\,\delta^{\mathcal A_k} + O\!\left(\sqrt{\frac{d\log n_{\mathcal{S}}}{n_{\mathcal{S}}}}+\sqrt{\frac{d\log n_{\mathcal{A}_k}}{n_{\mathcal{A}_k}}}  + \sqrt{\frac{d\log n_{\mathcal{T}}}{n_{\mathcal{T}}}}\right),
\end{equation*}
\end{theorem}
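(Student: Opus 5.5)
The plan is to start from the classical bound of Lemma~\ref{lem:domain-adaptation},
\[
\varepsilon_{\mathcal T}(h)\le \varepsilon_{\mathcal S}(h)+d(\mathcal D_{\mathcal S}^X,\mathcal D_{\mathcal T}^X)+\lambda^*,
\]
and convert each population quantity into an observable one. Three replacements are needed: $\varepsilon_{\mathcal S}(h)$ by $\widehat\varepsilon_{\mathcal S}(h)$, the input divergence by its empirical version, and the unobservable joint error $\lambda^*$ by the PT indicators $\lambda^{\mathcal A_k}$ and $\delta^{\mathcal A_k}$. The first two steps are standard uniform-convergence arguments. A VC/Rademacher bound for the bounded loss $\ell$ over $\mathcal H$, with capacity measured by $d$, gives $|\varepsilon_{\mathcal S}(h)-\widehat\varepsilon_{\mathcal S}(h)|=O(\sqrt{d\log n_{\mathcal S}/n_{\mathcal S}})$ with high probability. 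Using the triangle inequality for $d$ (as in Lemma~\ref{lem:triangle-discrepancy}) together with concentration of empirical divergences (e.g.\ MMD or an $\mathcal H$-divergence estimator), we get $d(\mathcal D^X_{\mathcal S},\mathcal D^X_{\mathcal T})\le d(\widehat{\mathcal D}^X_{\mathcal S},\widehat{\mathcal D}^X_{\mathcal T})+O(\sqrt{d\log n_{\mathcal S}/n_{\mathcal S}}+\sqrt{d\log n_{\mathcal T}/n_{\mathcal T}})$.

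The core step is bounding $\lambda^*$. Take the representation $\Phi$ of Assumption~\ref{ass:stable-representation} and hypotheses of the form $g\circ\Phi$. The joint error splits into two parts. The first is a label-shift part, measuring how far the output distributions move relative to the representation distributions. The second is a representation-shift part between $\mathcal S$ and $\mathcal T$.

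For the first part, following the conditional-shift argument of \citet{zhao2019ICLR}, the excess joint error is controlled by the mismatch $|d(\mathcal D^Y_{\mathcal S},\mathcal D^Y_{\mathcal T})-d(\mathcal D^Z_{\mathcal S},\mathcal D^Z_{\mathcal T})|$. Assumption~\ref{ass:physical-consistency} and Assumption~\ref{ass:stable-representation} bound the source-to-anchor $Y$- and $Z$-divergences by $L|\xi_{\mathcal S}-\xi_{\mathcal A_k}|$ and $C|\xi_{\mathcal S}-\xi_{\mathcal A_k}|$. The CAD ordering then lets us compare the $\mathcal S\to\mathcal T$ mismatch with the $\mathcal S\to\mathcal A_k$ mismatch along the chain. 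I will state this transfer explicitly as the working hypothesis that defines $B_1$, namely that the squared mismatch at the target is at most $B_1$ times the squared mismatch at the anchor. Squaring and halving then yields a term $B_1\lambda^{\mathcal A_k}$.

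For the second part, Lemma~\ref{lem:triangle-discrepancy} gives $d(\mathcal D^Z_{\mathcal S},\mathcal D^Z_{\mathcal T})\le d(\mathcal D^Z_{\mathcal S},\mathcal D^Z_{\mathcal A_k})+d(\mathcal D^Z_{\mathcal A_k},\mathcal D^Z_{\mathcal T})$. Factoring out the source–anchor distance, which is bounded by $C|\xi_{\mathcal T}-\xi_{\mathcal S}|$ via Assumption~\ref{ass:stable-representation} and by boundedness of $\Xi$, turns this into a constant times the ratio $\delta^{\mathcal A_k}$. The constant $B_2$ absorbs $C$, $L$, the Lipschitz constant of $g$, and the loss bound.

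Finally, every population distance appearing inside $\lambda^{\mathcal A_k}$ and $\delta^{\mathcal A_k}$ is replaced by its empirical counterpart. For the ratio, this requires the anchor distance to be bounded away from zero. Because $\lambda^{\mathcal A_k}$ is quadratic in distances that are themselves bounded, its perturbation is at most linear in the estimation error. Both replacements therefore contribute concentration terms of order $\sqrt{d\log n/n}$ in $n_{\mathcal S}$, $n_{\mathcal A_k}$ and $n_{\mathcal T}$. A union bound over all the events finishes the proof.

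I expect the main obstacle to be the rigorous control of $\lambda^*$ by $\lambda^{\mathcal A_k}$. The squared gap between label-space and representation-space distances is not, in general, an upper bound on the joint optimal error. Making that step precise will likely force an additional structural condition, such as realizability of $\mathcal P(\cdot;\xi)$ by $g_\xi\circ\Phi$ with $g$ Lipschitz uniformly in $\xi$. It is also the point where $B_1$ must be interpreted as depending on the chain geometry. I would handle this first, proving $\lambda^*\le B_1\lambda^{\mathcal A_k}+B_2\delta^{\mathcal A_k}+(\text{estimation})$ as a standalone lemma under that realizability condition, before assembling the bound.
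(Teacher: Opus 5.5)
Your outline is essentially the paper's: start from Lemma~\ref{lem:domain-adaptation}, pass through $\mathcal A_k$ with the triangle inequality, let $\delta^{\mathcal A_k}$ absorb the source--target representation shift and $\lambda^{\mathcal A_k}$ stand in for $\lambda^*$, then add concentration terms (which you treat more carefully than the paper does). The gap is the step you postpone, and it is the substantive content of the theorem. Adopting ``the squared mismatch at the target is at most $B_1$ times the squared mismatch at the anchor'' as a working hypothesis that defines $B_1$ assumes the conclusion. The paper's proof does not supply this step either; it only asserts that $\lambda^{\mathcal A_k}$ is an ``observable surrogate'' of $\lambda^*$, so there is nothing to borrow from it.

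You are right that the squared gap does not bound the joint error from above. In fact the inequality of \citet{zhao2019ICLR} on which the indicator is modelled reads $\varepsilon_{\mathcal S}(g\circ\Phi)+\varepsilon_{\mathcal T}(g\circ\Phi)\ge\frac12\bigl[d_{JS}(\mathcal D^Y_{\mathcal S},\mathcal D^Y_{\mathcal T})-d_{JS}(\mathcal D^Z_{\mathcal S},\mathcal D^Z_{\mathcal T})\bigr]^2$ whenever the bracket is nonnegative, with $d_{JS}$ the Jensen--Shannon distance. This is a \emph{lower} bound, so following that argument cannot control $\lambda^*$ from above. Your split of $\lambda^*$ into a label-shift part and a representation-shift part is likewise unsupported. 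The estimate $d(\mathcal D^Z_{\mathcal S},\mathcal D^Z_{\mathcal T})=\delta^{\mathcal A_k}\,d(\mathcal D^Z_{\mathcal S},\mathcal D^Z_{\mathcal A_k})\le C|\xi_{\mathcal A_k}-\xi_{\mathcal S}|\,\delta^{\mathcal A_k}$ is correct. But known upper bounds of this kind contain, besides a representation divergence, a conditional-shift term such as $\mathbf{E}_{x\sim\mathcal D^X_{\mathcal T}}\|\mathcal P(x;\xi_{\mathcal S})-\mathcal P(x;\xi_{\mathcal T})\|$, and that is the term that has to be controlled.

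Realizability, even uniformly in $\xi$ as you propose, cannot repair this, because the right-hand side is blind to target labels. The quantities $\widehat\varepsilon_{\mathcal S}(h)$, $d(\widehat{\mathcal D}^X_{\mathcal S},\widehat{\mathcal D}^X_{\mathcal T})$, $\lambda^{\mathcal A_k}$ (only $\mathcal S$ and $\mathcal A_k$ enter) and $\delta^{\mathcal A_k}$ (only $Z$-marginals enter) stay fixed under the following change. Keep $\mathcal D^X_{\mathcal T}$, freeze $\mathcal P$ at $\xi_{\mathcal A_K}$ for $\xi>\xi_{\mathcal A_K}$, and add $\pm L(\xi-\xi_{\mathcal A_K})$ to the output. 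Both choices satisfy Assumption~\ref{ass:physical-consistency} and your realizability condition, since a constant shift leaves the Lipschitz constant of $g_\xi$ unchanged. So for the squared loss the common right-hand side must exceed $\max_{\pm}\varepsilon_{\mathcal T}(h)\ge L^2|\xi_{\mathcal T}-\xi_{\mathcal A_K}|^2$, a quantity none of the indicators sees.

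Hence a non-vacuous bound of this form requires one of two things. Either $B_1,B_2$ absorb this shift (your remark about chain geometry), or an extra hypothesis forces an indicator to grow with the complexity gap. Without either, the inequality is true only trivially, since $B_2\ge\sup\ell$ works whenever $\delta^{\mathcal A_k}\ge1$.

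The second option does give a proof, by a route different from both yours and the paper's:
\begin{itemize}
\item Assume $\mathcal P(\cdot;\xi_{\mathcal S})$ is realizable in $\mathcal H$ and the loss is $L_\ell$-Lipschitz with $\ell(y,y)=0$. Then Assumption~\ref{ass:physical-consistency} gives $\lambda^*\le L_\ell L|\xi_{\mathcal T}-\xi_{\mathcal S}|$ directly.
\item Add a \emph{reverse} stability condition $d(\mathcal D^Z_{\xi_1},\mathcal D^Z_{\xi_2})\ge c|\xi_1-\xi_2|$; Assumption~\ref{ass:stable-representation} gives only the upper bound. This turns the previous estimate into $\lambda^*\le(L_\ell LC/c)\,|\xi_{\mathcal A_k}-\xi_{\mathcal S}|\,\delta^{\mathcal A_k}$.
\item Your concentration steps then finish the argument, with $B_1=0$, so the feasibility indicator plays no role in the upper bound.
\end{itemize}

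Finally, Lemma~\ref{lem:domain-adaptation} holds for the $\mathcal H\Delta\mathcal H$-divergence, which MMD does not dominate in general. So only the $\mathcal H$-divergence option is available in your input-divergence step.
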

where $\widehat{\varepsilon}_{\mathcal{S}}(h)$ denotes the empirical risk evaluated on the source domain, $n$ represents the sample size of the different domains along CADs, and $d$ characterizes the complexity of the hypothesis class $\mathcal{H}$ (e.g., the Vapnik-Chervonenkis (VC) dimension).

\begin{proof}
Starting from the domain adaptation bound in Lemma~\ref{lem:domain-adaptation}~\citep{ben2010ML}.
To analyze the discrepancy term, we introduce the anchor domain $\mathcal A_k$ along the CADs.
Using the triangle inequality of the divergence metric,
\begin{align*}
d(\mathcal D_{\mathcal{S}}^X,\mathcal D_{\mathcal{T}}^X)
&\le d(\mathcal D_{\mathcal{S}}^X,\mathcal D_{{\mathcal{A}_k}}^X)
 + d(\mathcal D_{{\mathcal{A}_k}}^X,\mathcal D_{\mathcal{T}}^X).
\end{align*}
Applying the representation Lipschitz property (Lemma~\ref{lem:representation-lipschitz}), the discrepancies in the representation space satisfy
\begin{align*}
d(\mathcal D_{\mathcal{S}}^Z,\mathcal D_{{\mathcal{A}_k}}^Z)
&\le L_\Phi d(\mathcal D_{\mathcal{S}}^X,\mathcal D_{{\mathcal{A}_k}}^X),\\
d(\mathcal D_{{\mathcal{A}_k}}^Z,\mathcal D_{\mathcal{T}}^Z)
&\le L_\Phi d(\mathcal D_{{\mathcal{A}_k}}^X,\mathcal D_{\mathcal{T}}^X).
\end{align*}

Combining these inequalities and Lemma~\ref{lem:triangle-discrepancy} yields
\begin{align*}
d(\mathcal D_{\mathcal{S}}^Z,\mathcal D_{\mathcal{T}}^Z)
&\le L_\Phi d(\mathcal D_{\mathcal S}^X,\mathcal D_{\mathcal T}^X)\\
&\le L_\Phi \left[
d(\mathcal D_{\mathcal{S}}^X,\mathcal D_{{\mathcal{A}_k}}^X)
+ d(\mathcal D_{{\mathcal{A}_k}}^X,\mathcal D_{\mathcal{T}}^X)
\right].
\end{align*}
Under Assumptions~\ref{ass:physical-consistency} and~\ref{ass:stable-representation}, the discrepancies between domains vary smoothly with the complexity parameter along the CADs.
Consequently, the relative scaling between the discrepancies $d(\mathcal D_{\mathcal S}^X,\mathcal D_{\mathcal T}^X)$ toward the anchor and target domains can be estimated through the PT reliability indicator $\delta^{\mathcal A_k}$.
Meanwhile, the PT feasibility indicator $\lambda^{\mathcal A_k}$ provides an observable surrogate of $\lambda^{*}$ that reflects the PT feasibility between the source and anchor domains~\citep{zhao2019ICLR}.
Combining these bounds with the empirical risk estimation yields the stated inequality.
\end{proof}

\noindent
\textbf{Stepwise transfer across CADs.}
The following result characterizes how stepwise transfer along the CADs controls the accumulation of extrapolation errors under physical continuity and sufficiently complete DLs.

\begin{theorem}[PT extrapolation convergence]
Suppose Assumptions~\ref{ass:physical-consistency} and~\ref{ass:stable-representation} hold. 
Let $\Delta_k = |\xi_{\mathcal A_k}-\xi_{\mathcal A_{k-1}}|$ denote the complexity gap between neighboring domains along the CADs. 
The DLs provide sufficiently dense coverage for each domain so that the learned representation $\Phi$ approximates the underlying physical latent variables with error $\varepsilon_{\text{data}}$.
Then the target-domain prediction error satisfies

\begin{equation*}
\varepsilon_{\mathcal T}(h) \le \widehat{\varepsilon}_{\mathcal{S}}(h) + C_1 \sum_{k=1}^{K+1} \Delta_k^2 + C_2 \varepsilon_{\text{data}}.
\end{equation*}

Consequently, as the CADs become sufficiently dense 
$(\max_k \Delta_k \rightarrow 0)$ and the DLs become sufficiently complete $(\varepsilon_{\text{data}} \rightarrow 0)$, the target-domain prediction error converges to the source-domain error.
\end{theorem}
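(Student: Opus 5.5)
The plan is to telescope along the CADs, writing $\mathcal A_0=\mathcal S$ and $\mathcal A_{K+1}=\mathcal T$. I would apply a one-step version of the domain adaptation bound (Lemma~\ref{lem:domain-adaptation}), but in the representation space and for each adjacent pair $(\mathcal A_{k-1},\mathcal A_k)$. For $h=g\circ\Phi$ this gives
\begin{equation*}
\varepsilon_{\mathcal A_k}(h)-\varepsilon_{\mathcal A_{k-1}}(h)\le d(\mathcal D^Z_{\mathcal A_{k-1}},\mathcal D^Z_{\mathcal A_k})+\lambda^*_k ,
\end{equation*}
where $\lambda^*_k$ is the joint error of the pair. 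Summing over $k=1,\dots,K+1$ then gives $\varepsilon_{\mathcal T}(h)\le\varepsilon_{\mathcal S}(h)+\sum_k\bigl[d_k+\lambda^*_k\bigr]$. Replacing $\varepsilon_{\mathcal S}$ by $\widehat\varepsilon_{\mathcal S}$ costs a finite-sample term, which I would absorb into $C_2\varepsilon_{\text{data}}$ by interpreting ``sufficiently dense DLs'' as also controlling estimation error.

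The key step is to show that each per-step increment is $O(\Delta_k^2)$ rather than $O(\Delta_k)$. Assumptions~\ref{ass:physical-consistency} and~\ref{ass:stable-representation} only give first-order control: $d_k\le C\Delta_k$, and label shifts satisfy $\|\mathcal P(x;\xi_{\mathcal A_k})-\mathcal P(x;\xi_{\mathcal A_{k-1}})\|\le L\Delta_k$. With only these, the telescoped sum is $\sum_k\Delta_k\approx|\xi_{\mathcal T}-\xi_{\mathcal S}|$, which does not vanish under refinement.

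To reach the quadratic rate I would use the structure of the feasibility indicator $\lambda^{\mathcal A_k}$, which is itself quadratic in a difference of first-order discrepancies. Under the two assumptions, both $d(\mathcal D^Y_{k-1},\mathcal D^Y_k)$ and $d(\mathcal D^Z_{k-1},\mathcal D^Z_k)$ are $O(\Delta_k)$. If the learned $\Phi$ tracks the physics, their leading-order terms cancel up to the representation error, so the per-step $\lambda$ is at most a constant times $\Delta_k^2$ plus a term of order $\varepsilon_{\text{data}}$. I would then take $\lambda^*_k$ to be controlled by this per-step feasibility quantity, as in the proof of Theorem~\ref{thm:operational-bound}.

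The first-order divergence term $d_k$ must also be handled. I would argue that with a physically stable representation, the loss $\ell(g(z),y)$ is a smooth function whose first variation in $\xi$ vanishes at the trained optimum, because $g$ is fit to the physical map and the residual is stationary. A second-order expansion in $\xi$ then bounds the risk change by $\tfrac12 M\Delta_k^2$, plus a deviation term proportional to $\|\Phi-\Phi^\ast\|$ that is absorbed into $\varepsilon_{\text{data}}$.

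The main obstacle is this stationarity and second-order argument. Neither assumption provides curvature bounds or first-order cancellation, so I expect to add an implicit smoothness hypothesis, namely twice-differentiability of $\xi\mapsto\varepsilon_{\mathcal D_\xi}(h)$ with bounded second derivative and vanishing first derivative for the physically consistent predictor. I would state this explicitly as the content of the phrase ``$\Phi$ approximates the underlying physical latent variables.''

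The rest is bookkeeping. I would set $C_1=\max(M/2,\,c_\lambda)$ and collect the per-step data errors into $C_2\varepsilon_{\text{data}}$, where $C_2$ may depend on $K$ unless the errors are assumed summable. The convergence claim then follows, since $\sum_k\Delta_k^2\le\max_k\Delta_k\cdot|\xi_{\mathcal T}-\xi_{\mathcal S}|\to0$.
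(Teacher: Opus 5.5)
There is a genuine gap: the step that carries the whole quadratic rate, a vanishing first variation of the risk in $\xi$ at each anchor, is false in general, and the hypothesis you add in its place already contains the conclusion. Write $f(\xi)=\varepsilon_{\mathcal D_\xi}(h)$ for the fixed source-trained $h$. Training makes the source risk stationary in the network parameters, not in $\xi$. With squared loss and $\xi$-independent inputs, $f'(\xi)=-2\,\mathbf{E}[(h(X)-\mathcal P(X;\xi))\,\partial_\xi\mathcal P(X;\xi)]$. This is generically nonzero at every $\xi_{\mathcal A_k}\neq\xi_{\mathcal S}$, because the residual has grown to order $\xi_{\mathcal A_k}-\xi_{\mathcal S}$. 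Even granting $f'(\xi_{\mathcal S})=0$ (say, a perfect source fit), the expansion about $\xi_{\mathcal S}$ gives $f(\xi_{\mathcal T})-f(\xi_{\mathcal S})\approx\mathbf{E}[(\partial_\xi\mathcal P)^2]\,(\xi_{\mathcal T}-\xi_{\mathcal S})^2$. That is the square of the total gap, not $\sum_k\Delta_k^2$.

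More fundamentally, $h$ is fixed and the anchors are only evaluation points, so $f(\xi_{\mathcal T})-f(\xi_{\mathcal S})$ does not depend on the chain. Telescoping rewrites it exactly, so any per-step bound whose sum vanishes under refinement amounts to proving $\varepsilon_{\mathcal T}(h)\le\varepsilon_{\mathcal S}(h)+C_2\varepsilon_{\text{data}}$ outright. Suppose you require $f'(\xi_{\mathcal A_k})=0$ at the anchors of arbitrarily fine chains, with $f'$ continuous as your twice-differentiability hypothesis entails. Then $f'\equiv0$ on $[\xi_{\mathcal S},\xi_{\mathcal T}]$, so $f$ is constant and the $\sum_k\Delta_k^2$ term carries no content. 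Once you expand the risk directly in $\xi$, the domain-adaptation telescoping also plays no role.

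The other ingredients do not rescue this.
\begin{itemize}
\item In the result of \citet{zhao2019ICLR} that the paper borrows, $\tfrac12\bigl[d(\mathcal D^Y_{\mathcal A_{k-1}},\mathcal D^Y_{\mathcal A_k})-d(\mathcal D^Z_{\mathcal A_{k-1}},\mathcal D^Z_{\mathcal A_k})\bigr]^2$ is a \emph{lower} bound on the joint error of $g\circ\Phi$. Its smallness therefore gives no upper bound on $\lambda^*_k$.
\item Its $O(\Delta_k^2)$ size is automatic from squaring two $O(\Delta_k)$ distances; no cancellation is needed. So it carries no second-order information about risk increments.
\item As you note, summing $K+1$ per-step $\varepsilon_{\text{data}}$ contributions makes $C_2$ grow with $K$. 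Since $K\to\infty$ as $\max_k\Delta_k\to0$, the final limit fails unless $K\varepsilon_{\text{data}}\to0$.
\end{itemize}

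No bookkeeping can repair this from Assumptions~\ref{ass:physical-consistency} and~\ref{ass:stable-representation} alone. Take $X\sim U[0,1]$ in every domain, $\mathcal P(x;\xi)=x+\xi$, $\Phi=g=\mathrm{id}$, squared loss, $\xi_{\mathcal S}=0$ and $\xi_{\mathcal T}=1$. Both assumptions hold ($L=1$, and $\mathcal D^Z_\xi$ does not depend on $\xi$), and there is no finite-data or coverage error. Then $\widehat\varepsilon_{\mathcal S}(h)=0$ but $\varepsilon_{\mathcal T}(h)=1$, while the right-hand side tends to $0$ under refinement.

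The paper's own proof is heuristic at exactly this point. It asserts the $\sum_k\Delta_k^2$ control and then feeds $\lambda^{\mathcal A_k}\to0$, $\delta^{\mathcal A_k}\to1$ into Theorem~\ref{thm:operational-bound}, whose $d(\widehat{\mathcal D}^X_{\mathcal S},\widehat{\mathcal D}^X_{\mathcal T})$ and $B_2\delta^{\mathcal A_k}$ terms do not vanish. Your observation that the first-order assumptions give only $\sum_k\Delta_k\approx|\xi_{\mathcal T}-\xi_{\mathcal S}|$ is correct and sharper than what the paper says.

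What is actually missing is a hypothesis that makes the risk of $h$ itself insensitive to $\xi$. One example is $Y=G(\Phi(X))$ in every domain with a $\xi$-independent latent map $G$, as in the oscillator-strength example, and $\sup_z|g(z)-G(z)|$ controlled by $\varepsilon_{\text{data}}$. Under such a hypothesis the bound follows directly, and the $\Delta_k$ term is superfluous.
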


\begin{proof}
Under Assumption~\ref{ass:physical-consistency}, the PHM varies smoothly with respect to the complexity parameter $\xi$. 
Therefore, the discrepancy between neighboring domains along the CADs is controlled by the complexity gaps $\Delta_k = |\xi_{\mathcal A_k}-\xi_{\mathcal A_{k-1}}|$. 
When the CADs are sufficiently dense, the cumulative physical discrepancy between the source and target domains is bounded by $\sum_{k=1}^{K+1} \Delta_k^2$, which determines the magnitude of the PT feasibility and reliability indicators.
Under Assumption~\ref{ass:stable-representation}, the learned representation approximates the underlying physical latent variables under weak physics constraints when the DLs provide sufficiently dense coverage of the physical variability. 
Finite data coverage introduces a representation approximation error $\varepsilon_{\text{data}}$, which influences the estimation of the PT indicators.
As a result, the discrepancy between the source and anchor domains in the representation space diminishes, which leads to the feasibility indicator satisfying
$\lambda^{\mathcal A_k}\rightarrow 0$.
Meanwhile, the discrepancy ratios between domains along the CADs approach unity, implying
$\delta^{\mathcal A_k}\rightarrow 1$.
Consequently, the PT indicators appearing in Theorem~\ref{thm:operational-bound} can be bounded by the physical complexity gaps along the CADs and the representation approximation error induced by finite data.
\end{proof}

Overall, the PT theory framework converts domain generalization into a controlled extrapolation problem along the CADs and weak physics constraints.
The indicators $\lambda^{\mathcal A_k}$ and $\delta^{\mathcal A_k}$ provide operational criteria for assessing transfer feasibility and reliability.
The PT theory framework also highlights two key conditions for reliable cross-domain generalization, including physical consistency across the CADs and sufficient completeness of the DLs used to learn the latent physical representation. 
When these two conditions are satisfied, stepwise transfer along the CADs provides a principled strategy for achieving robust extrapolation to complex target domains.

\subsection{Complexity Ladders}

The computational levels used in the PT framework correspond to commonly adopted Gaussian-type orbital (GTO) basis sets in quantum chemistry. 
The minimal Slater-type orbital (STO-3G) basis set provides a highly efficient but low-fidelity approximation of molecular orbitals~\citep{hehre1969STO}. 
The split-valence basis sets 3-21G and 6-31G introduce additional flexibility by allowing multiple basis functions for valence electrons, thereby improving the description of electronic structure~\citep{hehre1972self}. 
The def2-SVP and def2-TZVP basis sets belong to the Karlsruhe def2 family of basis sets and provide systematically improved accuracy through the inclusion of polarization and additional valence functions~\citep{weigend2005PCCP}. 
As the basis size increases, both the number of basis functions and the computational cost of electronic structure calculations grow significantly, leading to a hierarchy of computational fidelities that naturally defines the complexity ladder used in the CADs construction.

In atomistic simulations, the fidelity of the potential energy surface (PES) defines a natural hierarchy of computational complexity.
Empirical force fields such as the embedded atom method (EAM) and the modified embedded atom method (MEAM) provide efficient approximations of interatomic interactions based on parameterized functional forms~\citep{becker2013EAMdata}.
Machine-learning force fields (MLFFs) introduce substantially higher representational flexibility by learning the PES directly from quantum mechanical data, enabling improved accuracy across diverse atomic environments~\citep{song2024general}.
At the highest fidelity level, density functional theory (DFT) calculations explicitly solve the electronic structure problem within an approximate quantum mechanical framework, providing significantly more accurate but computationally expensive descriptions of interatomic interactions~\citep{su2019JAP}.
These levels, therefore, form a natural computational fidelity ladder for PES modeling, spanning empirical models, data-driven surrogates, and first-principles calculations.

More broadly, similar complexity ladders arise across many scientific systems.
For example, geometric complexity may evolve from simple spheres to ellipsoids and further to surfaces with nonzero Gaussian curvature, such as saddle geometries and brain morphology~\citep{zhao2026predicting}.
Such progressively increasing structural complexity illustrates the ubiquity of stepwise physical extrapolation across scientific domains.

\subsection{Neural Network Architectures}
Both representative problems are implemented with fully connected neural networks containing an explicit representation-learning block and a scalar regression head.
This choice keeps the learning architecture simple, so that the transfer behavior can be attributed primarily to the physical descriptors and the CADs construction rather than to a highly specialized model design.
For the oscillator-strength problem, the input consists of the transition energy and the three Cartesian components of the transition electric dipole moment.
These four physically motivated descriptors are first mapped to a latent representation through a 64-unit hidden layer and a 32-dimensional representation layer.
The decoder then maps the latent representation through another 64-unit hidden layer to the predicted oscillator strength.
Rectified linear unit (ReLU) activations are used in the hidden layers.
For the sliding-resistance problem, the input consists of the descriptors used to characterize the lattice and potential-energy surface.
The network uses three 64-unit hidden layers before a linear output layer for the scalar sliding resistance.
In both examples, the latent representation before the final regression layer is used as $Z$ for the PT-indicator analysis.

\subsection{Data Sources and CAD Construction}

For oscillator-strength prediction, the data are taken from the QeMFi excited-state dataset~\citep{vinod2025qemfi}.
The source, anchor, and target domains are defined by increasing basis-set fidelity, including STO-3G, 3-21G, 6-31G, def2-SVP, and def2-TZVP.
Transition energies, transition electric dipole moments, and oscillator strengths are used to form paired samples at each fidelity level.
The source domain is used for model training, the anchor domains are used for PT-indicator analysis, and the def2-TZVP domain is reserved as the high-fidelity target.
For sliding-resistance prediction, the data are constructed from descriptions of sliding resistance across force-field fidelities.
Empirical force fields define the source domain, MLFFs define the anchor domain, and DFT calculations define the target domain.
The PT descriptors encode non-equilibrium transition information associated with the sliding pathway and potential barrier.
A reduced statistical learning (SL) control is also considered by using descriptors that do not explicitly encode the non-equilibrium transition mechanism.

\subsection{Training Protocols}

The two examples follow the same training and application protocol.
Models are trained only on the source domain, and all preprocessing parameters are fitted using source-domain data.
The normalization is then applied unchanged to anchor and target domains.
This prevents target-domain information from entering the training process and keeps the anchor and target evaluations in a zero-shot setting.
For oscillator-strength prediction, each fidelity-level dataset is split into training and validation subsets with a 70/30 ratio.
The source-domain model is trained over ten independent runs with mini-batch optimization and early-stopping monitoring on the validation subset.
The trained source models are then fixed and evaluated directly on the anchor and target domains.
For sliding-resistance prediction, ten independent source-domain training runs are also performed.
In each run, 20\% of the source-domain samples were held out for validation.
The model is optimized using the mean-squared-error loss and the adaptive moment estimation (ADAM) optimizer with hyperparameter settings: a learning rate of $0.01$, and first- and second-moment decay rates $\beta_0 = 0.9, \beta_1 = 0.999$~\citep{kingma2014ADAM}.
The difference in training length between the two examples reflects the different sizes and numerical characteristics of the quantum and mechanical datasets, while the source-only training and fixed-model transfer protocol are kept the same.

\subsection{Prediction Metrics and PT Indicators}

Prediction accuracy is evaluated using MAE, RMSE, and $R^2$.
The reported values are summarized over ten independent runs as the mean and standard deviation.
The PT indicators are calculated from empirical distribution distances in the label space and in the learned latent representation space.
In the implementation, these distances were estimated by maximum mean discrepancy (MMD) using a radial-basis-function (RBF) kernel, with the kernel bandwidth selected by the median heuristic~\citep{gretton2012kernel}.
$\lambda^{\mathcal{A}_k}$ compares how far an anchor domain is from the source domain in the output space with how far it is in the learned representation space.
A small value indicates that the learned latent representation preserves the relevant physical relation across the source-anchor shift.
The reliability indicator $\delta^{\mathcal{A}_k}$ compares the source-target latent discrepancy with the source-anchor latent discrepancy.
Together, these two indicators are used to judge whether anchor-domain success provides meaningful evidence for target-domain generalization.

\subsection{Zero-Shot Extrapolations}

Zero-shot extrapolation is evaluated by training the models only on the source domain and then applying the fixed model directly to anchor and target domains.
In the oscillator-strength example, models trained on STO-3G data are evaluated without retraining on the 3-21G, 6-31G, def2-SVP, and def2-TZVP domains.
For the convergence analysis, models are independently trained on progressively higher-fidelity source or anchor domains and then evaluated on the most complex anchor domain and the def2-TZVP target domain.
Target-domain labels are not used for model training or model selection.
In the sliding-resistance example, the models is trained on the empirical-force-field source domain and directly deployed on MLFFs anchor-domain and DFT target-domain samples.
The PT indicators are calculated from source, anchor, and target latent representations before target-label-based performance assessment.
This protocol separates transfer risk diagnosis from final target-domain error measurement and provides a direct test of whether a representation learned from low-fidelity data can extrapolate to higher-fidelity physical domains.

\bibliography{main_text}

\end{document}